\sloppy \pagestyle{plain}\binoppenalty=10000 \relpenalty=10000
\newtheorem{theorem}{Theorem}[section]
\newtheorem{lemma}{Lemma}[section]
\newtheorem{example}{Example}[section]
\newtheorem{remark}{Remark}[section]
\newcommand{\Aut}{\mathop{\sf Aut}\nolimits}
\newcommand{\tr}{\mathop{\sf tr}\nolimits}
\newcommand{\End}{\mathop{\sf End}\nolimits}
\newcommand{\ododlo}{\begin{picture}(20,10)(-2,-3)
\put(0,0){\line(1,0){16}}
\put(0,0){\circle*{3}}\put(16,0){\circle*{3}}
\end{picture}}\,
\newcommand{\dvadva}{\begin{picture}(20,14)(-2,-3)
\put(0,-6){\line(1,0){16}}
\put(0,6){\line(1,0){16}}
\put(0,-6){\circle*{3}}\put(16,-6){\circle*{3}}
\put(0,6){\circle*{3}}\put(16,6){\circle*{3}}
\end{picture}}\,
\newcommand{\oddva}{\begin{picture}(20,14)(-2,-3)
\qbezier(0,0)(8,3)(16,6)
\qbezier(0,0)(8,-3)(16,-6)
\put(16,-6){\circle*{3}}
\put(0,0){\circle*{3}}
\put(16,6){\circle*{3}}
\end{picture}}\,
\newcommand{\dvaod}{\begin{picture}(20,14)(-2,-3)
\qbezier(0,6)(8,3)(16,0)
\qbezier(0,-6)(8,-3)(16,0)
\put(0,-6){\circle*{3}}
\put(16,0){\circle*{3}}
\put(0,6){\circle*{3}}
\end{picture}}\,
\newcommand{\ododld}{\begin{picture}(20,14)(-2,-3)
\qbezier(0,0)(8,12)(16,0)
\qbezier(0,0)(8,-12)(16,0)
\put(0,0){\circle*{3}}\put(16,0){\circle*{3}}
\end{picture}}\,
\newcommand{\tritri}{\begin{picture}(20,14)(-2,-3)
\put(0,-6){\line(1,0){16}}
\put(0,0){\line(1,0){16}}
\put(0,6){\line(1,0){16}}
\put(0,-6){\circle*{3}}\put(16,-6){\circle*{3}}
\put(0,0){\circle*{3}}\put(16,0){\circle*{3}}
\put(0,6){\circle*{3}}\put(16,6){\circle*{3}}
\end{picture}}\,
\newcommand{\odtri}{\begin{picture}(20,14)(-2,-3)
\qbezier(0,0)(8,3)(16,6)
\put(0,0){\line(1,0){16}}
\qbezier(0,0)(8,-3)(16,-6)
\put(16,-6){\circle*{3}}
\put(0,0){\circle*{3}}\put(16,0){\circle*{3}}
\put(16,6){\circle*{3}}
\end{picture}}\,
\newcommand{\triod}{\begin{picture}(20,14)(-2,-3)
\qbezier(0,6)(8,3)(16,0)
\put(0,0){\line(1,0){16}}
\qbezier(0,-6)(8,-3)(16,0)
\put(0,-6){\circle*{3}}
\put(0,0){\circle*{3}}\put(16,0){\circle*{3}}
\put(0,6){\circle*{3}}
\end{picture}}\,
\newcommand{\dvatri}{\begin{picture}(20,14)(-2,-3)
\qbezier(0,3)(8,4.5)(16,6)
\put(0,-6){\line(1,0){16}}
\qbezier(0,3)(8,1.5)(16,0)
\put(0,-6){\circle*{3}}\put(16,-6){\circle*{3}}
\put(0,3){\circle*{3}}\put(16,0){\circle*{3}}
\put(16,6){\circle*{3}}
\end{picture}}\,
\newcommand{\tridva}{\begin{picture}(20,14)(-2,-3)
\qbezier(0,6)(8,4.5)(16,3)
\put(0,-6){\line(1,0){16}}
\qbezier(0,0)(8,1.5)(16,3)
\put(0,-6){\circle*{3}}\put(16,-6){\circle*{3}}
\put(0,0){\circle*{3}}\put(16,3){\circle*{3}}
\put(0,6){\circle*{3}}
\end{picture}}\,
\newcommand{\oddvalt}{\begin{picture}(20,14)(-2,-3)
\qbezier(0,0)(8,8)(16,6)
\qbezier(0,0)(8,-2)(16,6)
\qbezier(0,0)(8,-3)(16,-6)
\put(16,-6){\circle*{3}}
\put(0,0){\circle*{3}}
\put(16,6){\circle*{3}}
\end{picture}}\,
\newcommand{\dvadvalt}{\begin{picture}(20,14)(-2,-3)
\qbezier(0,3)(8,10)(16,3)
\qbezier(0,3)(8,-4)(16,3)
\put(0,-6){\line(1,0){16}}
\put(0,-6){\circle*{3}} \put(16,-6){\circle*{3}}
\put(16,3){\circle*{3}}
\put(0,3){\circle*{3}}
\end{picture}}\,
\newcommand{\zigzag}{\begin{picture}(20,14)(-2,-3)
\qbezier(0,-6)(8,-4)(16,-2)
\qbezier(0,2)(8,0)(16,-2)
\qbezier(0,2)(8,4)(16,6)
\put(0,-6){\circle*{3}}
\put(16,-2){\circle*{3}}
\put(0,2){\circle*{3}}
\put(16,6){\circle*{3}}
\end{picture}}\,
\newcommand{\dvaodlt}{\begin{picture}(20,14)(-2,-3)
\qbezier(0,6)(8,8)(16,0)
\qbezier(0,6)(8,-2)(16,0)
\qbezier(0,-6)(8,-3)(16,0)
\put(0,-6){\circle*{3}}
\put(16,0){\circle*{3}}
\put(0,6){\circle*{3}}
\end{picture}}\,
\newcommand{\ododlt}{\begin{picture}(20,14)(-2,-3)
\qbezier(0,0)(8,12)(16,0)
\put(0,0){\line(1,0){16}}
\qbezier(0,0)(8,-12)(16,0)
\put(0,0){\circle*{3}}\put(16,0){\circle*{3}}
\end{picture}}\,
\def\CF{Cardy-Frobenius }
\begin{document}

\begin{center}

\hfill FIAN/TD-27/12\\
\hfill ITEP/TH-49/12

\vspace{1cm}

\end{center}

\centerline{\large\bf Cardy-Frobenius extension of algebra}

\vspace{1ex}

\centerline{\large\bf of cut-and-join operators}

\vspace{1cm}

\centerline {A.Mironov, A.Morozov, S.Natanzon}
\address{Theory Department, Lebedev Physical Institute, Moscow, Russia;
Institute for Theoretical and Experimental Physics, Moscow, Russia}
\email{mironov@itep.ru; mironov@lpi.ru}
\address{Institute for Theoretical and Experimental Physics}
\email{morozov@itep.ru}
\address {Department of Mathematics, Higher School of Economics, Moscow, Russia;
A.N.Belozersky Institute, Moscow State University, Russia;
Institute for Theoretical and Experimental Physics}
\email{natanzons@mail.ru}

\vspace{1cm}

\centerline{ABSTRACT}

\bigskip

{\footnotesize Motivated by the algebraic open-closed string models,
we introduce and discuss an infinite-dimensional
counterpart of the open-closed Hurwitz theory describing branching coverings generated both by
the compact oriented surfaces and by the foam surfaces.
We manifestly construct the corresponding infinite-dimensional equipped \CF algebra, with the
closed and open sectors are represented by conjugation classes of
permutations and the pairs of permutations, i.e. by the algebra of
Young diagrams and bipartite graphes respectively. }


\vspace{1cm}

\section {Main result}

\subsection {\CF algebra}
We start with reminding the definition of the (finite-dimensional) equipped \CF algebra
following \cite{AN} and \cite{AN2}.

\vspace{1ex}

\textit{Frobenius pair} is a set $(C,l^C)$ which consists of a finite-dimensional associative algebra $C$ with an
identity element and a linear functional  $l^C:C\rightarrow\mathbb{C}$ such that the bilinear form
$(c_1,c_2)_C= l^C(c_1c_2)$ which it generates is non-degenerated.

\textit{Casimir} of the Frobenius pair $(C,l^C)$ is the element
$K_C=\sum\limits_{i=1}^{n}F^{ij}e_ie_j\in C$, where $\{e_1,\dots,e_n\}$ is a basis
of the space $C$ and $\{F^{ij}\}$ is the matrix inverse with respect to $(e_i,e_j)_C$.

\vspace{1ex}

For the Frobenius pairs $(A,l^A)$ $(B,l^B)$ and for the linear operator $\phi: A\rightarrow B$ denote as
$\phi^*: B\rightarrow A$ the linear operator determined by the condition $(\phi^*(b),a)_A=(b,\phi(a))_A$.

\vspace{2ex}

\textit{\CF algebra} is a set $((A,l^A),(B,l^B),\phi)$ which consists of

1) a commutative Frobenius pair $(A,l^A)$;

2) an arbitrary Frobenius pair $(B,l^B)$;

3) a homomorphism of the algebras $\phi: A\rightarrow B$ such that the image $\phi(A)$ belongs to the center of $B$ and
$(\phi^*(b'),\phi^*(b''))_A =\tr K_{b'b''}$, where the operator $K_{b'b''}:B\rightarrow B$ is defined as $K_{b'b''}(b)=b'bb''$.

\vspace{1ex}

\textit{Equipped \CF algebra} is a set $((A,l^A),(B,l^B),\phi,U,\star)$ which consists of

1) a \CF algebra $((A,l^A),(B,l^B),\phi)$;

2) an involutive anti-automorphisms $\star :A\rightarrow A$ and $\star : B\rightarrow B$ such that
$l^A(x^\star)=l^A(x)$, $l^B(x^\star)=l^B(x)$, $\phi(x^\star)=\phi(x)^\star$;

3) an element $U\in A$ such that $U^2=K_A^\star$ and $\phi(U)=K_B^\star$.

The commutative Frobenius pairs are in one-to-one correspondence \cite{D} with
closed topological field theories in the meaning of \cite{At}. The \CF algebras
are in one-to-one correspondence \cite{AN} with open-closed topological field
theories in the meaning of \cite{Laz}, \cite{Moore},\cite{MS}. The equipped \CF algebras
are in one-to-one correspondence \cite{AN} with Klein topological field theories, i.e. with those
also defined on non-oriented Riemann surfaces \cite{AN}.

Each real representation of finite group generates a semi-simple equipped \CF
algebra \cite{LN}. Some of them describe the Hurwitz numbers of finite-sheeted coverings
 \cite{AN2}, \cite{AN3}. There exists a complete classification of the
semi-simple equipped \CF algebras \cite{AN}.

\vspace{2ex}

In the definitions above one needs to invert matrices. Therefore, one needs an additional
accuracy when dealing with the infinite-dimensional case. We additionally require
that the algebras can be presented as a direct (Cartesian) product of finite-dimensional
algebras $A=\prod\limits_{\gamma\in\mathfrak{C}}A_\gamma$, $B=\prod\limits_{\gamma\in\mathfrak{C}}B_\gamma$ and,
instead of functionals on $A$
and $B$, we consider the families of functionals $l^A=\{l^A_\gamma:A_\gamma\rightarrow\mathbb{C}\}$,
$l^B=\{l^B_\gamma:B_\gamma\rightarrow\mathbb{C}\}$ such that:

1)$(A_\gamma,l^A_\gamma)$ and $(B_\gamma,l^B_\gamma)$ are Frobenius pairs;

2)$\phi(A_\gamma)\in B_\gamma$ and restrictions $\phi_\gamma$ of the homomorphism  $\phi$ onto $A_\gamma$
gives rise to the \CF algebras $((A_\gamma,l^A_\gamma),(B_\gamma,l^B_\gamma),\phi_\gamma)$;

3) The involution $\star$ preserves the subalgebras $A_\gamma$, $B_\gamma$ and, along
with the projections $U_\gamma$ of the element $U\in A$ onto $A_\gamma$, gives rise to the equipped \CF
algebras $((A_\gamma,l^A_\gamma),(B_\gamma,l^B_\gamma),\phi_\gamma,U_\gamma,\star)$

\vspace{1ex}

\subsection{Cut-and-join operators}

Let us construct an equipped \CF algebra that consists of differential operators
of the space of functions of infinitely many variables $\{X_{ij}|i,j=1,\dots,\}$.

The role of algebra $A$ is played by the algebra $W$ of cut-and-join
operators $W(\Delta)$
\cite{MMN1},\cite{MMN2},\cite{AMMN},\cite{MMN3}.

We remind the construction of these latter. Consider differential operators of the form
$D_{ab}=\sum_{e=1}^\infty X_{ae}\frac{\partial}{\partial X_{be}}$ and
associate with the Young diagram $\Delta=[\mu_1,\mu_2,\dots,\mu_k]$ with
line lengths $\mu_1\geq\mu_2\geq\dots\geq\mu_k$ the numbers $m_j=m_j(\Delta)= |\{i|\mu_i=j\}|$ and
$\kappa(\Delta)= (|\Aut(\Delta)|)^{-1} =(\prod\limits_{j}m_j!j^{m_j})^{-1}$. Then one associates with the Young diagram
$\Delta$ \textit{cut-and-join operator} $W(\Delta)=\kappa(\Delta):\prod\limits_{j}(\tr D^j)^{m_j}:$,
where $D$ is the infinite-dimensional matrix with the elements
$D_{ab}=\sum_{e=1}^\infty X_{ae}\frac{\partial}{\partial X_{be}}$. We denote
the product of operators as $\circ$. Properties of these operators $W(\Delta)$
differs a lot from their finite-dimensional counterparts \cite{KO}.

Denote through $W_n$ the vector space generated by the cut-and-join operators of degree $n$.
The product $\circ$ of operators continued to infinite formal sums $w_1+w_2+w_3+\dots$ with $w_n\in W_n$
generates an associative commutative algebra $W$ of formal series of differential operators.

\subsection{Graph operators}
The cut-and-join operators are related (see \cite{MMN2}) with the Hurwitz numbers
of branching coverings generated by the compact oriented surfaces \cite{D}. The role of algebra $B$ is
played by the algebra $V$ of graph operators related with the Hurwitz numbers of branching coverings
generated by the foam surfaces \cite{AN3}.

We define \textit{bipartite graph of degree $n$} as a graph with $n$ edges and vertices
parted into two ordered groups $L=L(\Gamma)$ and $R=R(\Gamma)$, with the edges $E=E(\Gamma)$ connecting the
vertices from different groups. Isomorphisms are homomorphisms of graphs preserving the partition of
vertices into groups and ordering in the groups. Hereafter, we do not make any difference between isomorphic graphs.

We call the graph \textit{simple} if all its connected components are graphs with two vertices, and call the
graph obtained from $\Gamma$ by adding simple connected components \textit{the standard extension} of the graph
$\Gamma$. Denote through $\mathcal{E}_n(\Gamma)$ the set of all degree $n$ standard extensions of the graph $\Gamma$.
We put $\sigma_n(\Gamma)= \sum\limits_{\hat{\Gamma}\in\mathcal{E}_n(\Gamma)}
\frac{|\Aut(\hat{\Gamma})|}{|\Aut(\hat\Gamma\setminus\Gamma)|
|\Aut(\Gamma)|} \hat{\Gamma}$ at $n\geq |\Gamma|$ and $\sigma_n(\Gamma)=0$ at $n< |\Gamma|$,
see \cite{MMN4}.

\vspace{1ex}

Associate with the monomial $x=X_{a_1b_1}\dots X_{a_nb_n}$ of degree $n$ the bipartite graph $\Gamma(x)$
with the edges $\{E_1,\dots,E_m\}$ such that the edges $E_i$ and $E_j$ has a common left (right) vertex if
and only if $a_i=a_j$ ($b_i=b_j$). Now associate with the graph $\Gamma$ \textit{graph-variable}
$X_{\Gamma}=\frac{1}{|\Aut(\Gamma)|}\sum X$,
where the sum runs over all monomials $x$ such that $\Gamma(x)=\Gamma$. Denote through $X_n$ the
vector space generated by the graph-variables of degree $n$.

Associate with the operator $D= :D_{a_1b_1}\dots D_{a_nb_n}:$ the bipartite graph $\Gamma(\mathcal{D})$
with the edges $\{E_1,\dots,E_m\}$ such that the edges $E_i$ and $E_j$ has a common left (right) vertex if
and only if $a_i=a_j$ ($b_i=b_j$). Now associate with the graph $\Gamma$ the operator
$V[\Gamma]= \frac{1}{|\Aut(\Gamma)|}\sum\mathcal{D}$, where the sum runs over all operators
$\mathcal{D}$ such that $\Gamma(D)=\Gamma$.
We call such operators \textit{graph-operators}.

\vspace{1ex}

Let us give an action of the graph-operator of degree $n$ onto the graph-variables
of the same degree. The usual action of the graph-operator onto the graph-variable
results into a linear combination of the graph-variables with (generally) infinite
coefficients. Therefore, the correct definition of differentiation requires a
regularization. To construct it, consider, along with the (full) graph-operator and graph-variable
$V[\Gamma]$, $X_{[\Gamma']}$, the restricted graph-operator $V^N[\Gamma]$ and graph-variable $X_{[\Gamma']}^N$
defined similarly to the full ones, but the infinite set of variables $\{X_{ij}|i,j= 1,\dots,N\}$ being replaced
with the finite one $\{X_{ij}|i,j= 1,\dots,N\}$.

We define the action of the graph-operator $V^N[\Gamma]$ onto the graph-variable  $X_{[\Gamma']}^N$ as the usual
action of the differential operator multiplied with
$\frac{(N-|R(\Gamma)|)!}{N!}$. One can easily see that  $V^N[\Gamma](X^N_{[\Gamma']})$ is a linear
combination of the restricted graph-variables $X^N_{[\Gamma'']}$. Moreover, the coefficients of this linear
combination are the same at any $N>|E(\Gamma)|$.

We define $V[\Gamma](X_{[\Gamma']})= \lim\limits_{N\rightarrow\infty}
V^N[\Gamma](X^N_{[\Gamma']})$ at $|\Gamma|=|\Gamma'|$ and continue it naturally to
$V[\Gamma](X_{[\Gamma']})=0$ at $|\Gamma|>|\Gamma'|$ and $V[\Gamma](X_{[\Gamma']})=
V[\sigma_{|\Gamma'|}(\Gamma)](X_{[\Gamma']})$ at $|\Gamma|<|\Gamma'|$.

\vspace{1ex}

Denote through $V_n$ the vector space generated by the graph-operators of degree $n$ and through $V$ the vector
space of the formal differential operators $v_1+v_2+v_3+\dots$, where $v_n\in V_n$. Define on $V$ the operation
$\circ$, requiring that
the operator $V[\Gamma_1]\circ V[\Gamma_2]$ acts on all the graph-variables $X[\Gamma]$ as $V[\Gamma_1](V[\Gamma_2](X[\Gamma]))$.

\subsection{\CF algebra of differential operators}

The cut-and-join operators act on the space of graph-variables by
the usual differentiation. Define the homomorphism of algebras $f:
W\rightarrow V$ requiring that the operator $f(w)$ acts on all the
graph-variables in the same way as the operator $w$ (below we prove
that there exists such an operator).

The involution $D_{ab}\leftrightarrow D_{ab}$ gives rise to involutive automorphisms
$\star: W\rightarrow W$ and $\star: V\rightarrow V$.

The Young diagrams of degree $n$ are naturally identified with a basis of the center of
the group algebra of the symmetric group $S_n$. The sum $U_n$ of squares of all the elements of the
symmetric group that act without fixed points, belongs to the same center. This allows one to associate
an operator $r_n\in W_n$ with the element $U_n$. Now put $R=\sum\limits_{i=1}^\infty r_n$.

\vspace{1ex}

Our main result is the following

\begin{theorem}\label{t1.1} There exist decompositions
$W=\prod\limits_{\gamma\in\mathfrak{C}}W_\gamma$,
$V=\prod\limits_{\gamma\in\mathfrak{C}}V_\gamma$ and the family of functionals
$l^W=\{l^W_\gamma:W_\gamma\rightarrow\mathbb{C}\}$,
$l^V=\{l^V_\gamma:V_\gamma\rightarrow\mathbb{C}\}$ such that the set
$((W,l^W),(V,l^V),f, R,\star)$ forms an equipped \CF algebra.
\end{theorem}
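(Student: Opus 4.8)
The plan is to reduce the infinite-dimensional assertion to a degree-by-degree comparison with the finite-dimensional theory of \cite{AN},\cite{LN}. I take the index set $\mathfrak{C}$ to be the set of degrees $n\ge 1$, with $W_n$, $V_n$ the degree-$n$ blocks already named above. The first and most delicate task is to show that the regularized product $\circ$ is degree-preserving on graph-variables, so that each $V_n$ acts on the finite-dimensional space of degree-$n$ graph-variables and $V=\prod_n V_n$, $W=\prod_n W_n$ are honest Cartesian products of finite-dimensional algebras. Concretely I would prove that $V^N[\Gamma](X^N_{[\Gamma']})$ stabilizes for $N>|E(\Gamma)|$, so the limit exists and the factor $\frac{(N-|R(\Gamma)|)!}{N!}$ is exactly what removes the $N$-dependence; that the prescription $V[\Gamma](X_{[\Gamma']})=V[\sigma_{|\Gamma'|}(\Gamma)](X_{[\Gamma']})$ for $|\Gamma|<|\Gamma'|$ makes the action associative; and that the operators restricted to a fixed degree close into a finite-dimensional associative algebra. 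This is where the passage to infinitely many variables could a priori destroy the Frobenius property, and it is the reason the definitions were recast in product form.

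With the decomposition in hand I would identify the blocks with the standard data attached to $S_n$: $W_n\cong Z(\mathbb{C}[S_n])$ via $W(\Delta)\mapsto$ the class sum of cycle type $\Delta$ (using that $\tr D^j$ is invariant under $D_{ab}\mapsto D_{ba}$, so that $\star$ in fact acts as the identity on the commutative $W$), and $V_n$ with the open-sector algebra whose basis is the bipartite graphs of degree $n$, i.e. the convolution algebra of pairs of permutations modulo simultaneous conjugation. I would then define the functionals blockwise, $l^W_n$ the central Frobenius functional (the coefficient of the identity class) and $l^V_n$ the corresponding open-sector trace, and deduce non-degeneracy of the induced forms from semisimplicity of $\mathbb{C}[S_n]$. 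Under these identifications each block $((W_n,l^W_n),(V_n,l^V_n),f_n)$ becomes the finite-dimensional equipped \CF algebra associated with $S_n$ in the open-closed/foam Hurwitz theory of \cite{AN2},\cite{AN3},\cite{LN}, so conditions (1)--(3) hold on each block and transport to the product.

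The homomorphism $f$ requires separate care, and is the point flagged in the text. I would first show that a cut-and-join operator is determined by, and recoverable from, its action on all graph-variables, so that $f(w)$ exists as a graph-operator and $f$ is a well-defined algebra homomorphism; since the $W(\Delta)$ are built from the symmetric expressions $\tr D^j$, their images commute with all graph-operators, giving $f(W_n)\subseteq$ the center of $V_n$. The characteristic Cardy identity $(f^*(v'),f^*(v''))_{W_n}=\tr K_{v'v''}$, with $K_{v'v''}(v)=v'\circ v\circ v''$, I would verify by translating both sides into $\mathbb{C}[S_n]$: the trace of the two-sided multiplication operator becomes a sum of structure constants matching the central pairing of the projections $f^*(v')$, $f^*(v'')$, the normalizations being fixed by the factors $|\Aut(\Gamma)|^{-1}$ in the definitions of $X_\Gamma$ and $V[\Gamma]$.

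Finally, for the equipped structure I would check that the involution induced by $D_{ab}\leftrightarrow D_{ba}$ preserves each block, is an anti-automorphism of $V$ (automatically an automorphism of the commutative $W$) compatible with $l^W$, $l^V$ and with $f$, and that $R=\sum_n r_n$ satisfies the two required identities blockwise, $r_n^2=K_{W_n}$ and $f(r_n)=K_{V_n}^\star$. Both reduce to statements about $U_n\in Z(\mathbb{C}[S_n])$, the sum of the squares of the fixed-point-free permutations: $U_n^2$ equals the Casimir of the center (the cross-cap identity of the Klein theory for $S_n$), and its image under $f$ is the open-sector Casimir. Assembling these blockwise verifications over all $n$ yields the equipped \CF algebra on $((W,l^W),(V,l^V),f,R,\star)$. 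I expect the genuine obstacle to be the first step, namely establishing that the regularized infinite-dimensional product is well-defined, associative and block-diagonal and that $f$ lands in $V$, since once the blocks are known to be finite-dimensional \CF algebras for $S_n$ the remaining axioms follow from the finite-dimensional theory.
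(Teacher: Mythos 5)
Your proposal follows essentially the same route as the paper: decompose $W$ and $V$ by degree (so $\mathfrak{C}=\{n\geq 1\}$), identify each block $(W_n,l^W_n)$ with the Young-diagram algebra $(A_n,l^A_n)$ (the center of $\mathbb{C}[S_n]$) and $(V_n,l^V_n)$ with the bipartite-graph algebra $(B_n,l^B_n)$, check that under these identifications $f_n$ corresponds to $\phi_n$ and $R$ to the elements $U_n$, and then import the equipped Cardy-Frobenius axioms blockwise from the finite-dimensional theorem of \cite{AN2}, which is exactly the paper's chain of Theorems \ref{t3.1}--\ref{t3.4} and Lemmas \ref{l3.2}, \ref{l3.3}. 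The one inaccuracy is your parenthetical gloss of bipartite graphs of degree $n$ as pairs of permutations modulo simultaneous conjugation (they are in fact pairs of set partitions of an $n$-element set modulo $S_n$; for $n=3$ the counts are $10$ versus $11$), but since your argument only ever uses the bipartite-graph algebra itself, this slip is harmless.
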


The constructed algebra is a counterpart of the algebra of cut-and-join operators
for the foam-coverings (simplest examples of such operators are constructed in \cite{N}).
We discuss it in detail elsewhere.

\bigskip

\paragraph{{\bf Acknowledgements.}}
S.N. is grateful to MPIM and IHES for the kind hospitality and support.

Our work is partly supported by Ministry of Education and Science of
the Russian Federation under contract 8498,
by Russian Federation Government Grant No. 2010-220-01-077,
by NSh-3349.2012.2 (A.Mir. and A.Mor.) and 8462.2010.1 (S.N.),
by RFBR grants 10-02-00509 (A.Mir. and A.N.), 10-02-00499 (A.Mor.) and
by joint grants 11-02-90453-Ukr, 12-02-92108-Yaf-a,
11-01-92612-Royal Society.

\section{Finite-dimensional algebras of Young diagrams and bipartite graphs}

\subsection{Young diagrams}

First, we remind the standard facts that we will need. Denote through
$|\mathfrak{M}|$ the number of elements in the finite set $\mathfrak{M}$ and through
$S_n$ the symmetric group that acts by permutations on the set $\mathfrak{M}$ with $|\mathfrak{M}|=n$.
The permutation $\sigma\in S_n$ generates the subgroup
$<\sigma>$, with its action dividing $\mathfrak{M}$ ito the orbits $\mathfrak{M}_1,\dots,\mathfrak{M}_k$.
The set of numbers $|\mathfrak{M}_1|,\dots,|\mathfrak{M}_k|$ is called \textit{cyclic type of the permutation $\sigma$}.
It produces the Young diagram $\Delta(\sigma)= [|\mathfrak{M}_1|,\dots,|\mathfrak{M}_k|]$ of degree $n$. Permutations are
conjugated in $S_n$ if and only if they have the same cyclic type.

Linear combinations of permutations from $S_n$ form the group algebra
$G_n=G(S_n)$. We denote multiplication in this algebra as $"*"$.
Associate with each Young diagram $\Delta$ the sum $G_n(\Delta)\in G_n$
of all permutations of the cyclic type $\Delta$. These sums (which, for the sake of brevity, we denote with the same
symbol $\Delta$ as the corresponding Young diagram) form a basis of \textit{algebra of the conjugation classes}
$A_n\subset G_n$ coinciding
with the center of algebra $G_n$.

\begin{example} \label{ex1.0} $[1]*[1]=[1]$, $[2]*[2]=[11]$, $[11]*[11]=[11]$, $[3]*[3]=2[111]$.
\end{example}

The sum $U_n$ of squares of elements of the group $S_n$ belongs to the algebra $A_n$. Denote through
$l_n^A: A_n \rightarrow\mathbb{C}$ a linear functional equal to
$\frac{1}{n!}$ on the Young diagram with all lines having unit length and equal to zero otherwise. Denote
through $*: A_n\rightarrow A_n$ the identity map $a\mapsto a^*=a$.

\subsection{Bipartite graphs}
Describe following \cite{AN1},\cite{AN2} the operation of multiplication $*$ on the vector space $B_n$
generated by the set of class of isomorphism of the bipartite graphs of degree $n$.

Let $(L,E,R)$ and $(L',E',R')$ is a pair of the bipartite graphs with $n$ edges.
Denote $Hom(R,L')$ the set of maps $\chi:R\rightarrow L'$ which preserve the vertex valences. Every such map
is associated with the bipartite graph $(R,E_\chi,L')$ whose edges connect only the vertices $\tilde{v}$ and
$\chi(\tilde{v})$, where $\tilde{v}\in R$, the number of edges connecting $\tilde{v}$ and $\chi(\tilde{v})$
being equal to the valence of the vertex $\tilde{v}$.

We call a subset $F\subset E\times E'$ consistent with $\chi$, if the restrictions onto $F$ of the natural
projections $E\times E'\rightarrow
E$, $E\times E'\rightarrow E'$ are in one-to-one correspondence and
$\chi(R(e))=L(e')$ for any $(e,e')\in F$. Denote through
$M_\chi$ the set of all such $F$. Associate with the set $F\in
M_\chi$ the bipartite graph $(L,\overline{F},R')$ whose edges are the pairs of edges
$(e,e')\in F$ glued in the points $R(e)$ and $L(e')$. Denote through
$\Aut_F(L,\overline{F},R')\subset\Aut(L,\overline{F},R')$ the subgroup which consists of the automorphisms
inducing on the set $E$ the automorphism of the graph $(L,E,R)$.

Let us now construct the map $B_n\times B_n\rightarrow B_n$ by putting $[(L,E,R)][(L',E',R')]=$
$\sum_{\chi\in Hom(R,L')}\sum_{F\in
M_\chi}\frac{|\Aut((R,\overline{F},L'))|}
{|\Aut_F((L,\overline{F},R'))|}[(L,\overline{F},R')]$.
Continuing it by linearity, one obtains a binary operation which transforms $B_n$
into algebra.

This operation has a simple geometrical meaning. The contribution to the product
$[(L,E,R)]*[(L',E',R')]$ is given by the valence-preserving identifications of vertices
from $R$ and $L'$. As a result of such an identification, there emerge "a special graph" with vertices
$L\cup R'$ and edges that intersect on "the set of singularities"
$R=L'$. Product is called a linear combination of "resolutions" of these singularities, i.e. of (generating
the bipartite graph) pairwise gluing of edges from $(L,E,R)$ and $(L',E',R')$ coming to the common vertex.

\vspace{2ex}

The algebra $B_1$ looks as follows
$$
\left[\ododlo\,\right]*\left[\ododlo\,\right] = \left[\ododlo\,\right]
$$

The multiplication table for $B_2$ looks as
$$
\begin{array}{|c||c|c|c|c|}
\hline
&&&&\\
{*}& \left[\ododld\right]& \left[\oddva\right]& \left[\dvaod\right]
& \left[\dvadva\right]\\
&&&&\\
\hline\hline
&&&&\\
\left[\ododld\right]& \left[\ododld\right]& \left[\oddva\right]& 0& 0\\
&&&&\\
\hline
&&&&\\
\left[\oddva\right]& 0& 0& \left[\ododld\right]& \left[\oddva\right]\\
&&&&\\
\hline
&&&&\\
\left[\dvaod\right]& \left[\dvaod\right]& \left[\dvadva\right]& 0 & 0 \\
&&&&\\
\hline
&&&&\\
\left[\dvadva\right]& 0& 0& \left[\dvaod\right]& v\left[\dvadva\right]\\
&&&&\\
\hline
\end{array}
$$

\begin{example} The multiplication table for $B_3$ is

\vspace{1ex}

\centerline{\footnotesize{
$
\begin{array}{|c|c|c|c|c|c|c|c|c|c|c|}
\hline
&&&&&&&&&&\\
{*} & \left[\ododlt\right] & \left[\tritri\right] & \left[\odtri\right]
& \left[\triod\right] & \left[\dvaodlt\right] & \left[\oddvalt\right]
& \left[\tridva\right] & \left[\dvatri\right] & \left[\dvadvalt\right]
& \left[\zigzag\right] \\
&&&&&&&&&&\\
\hline
&&&&&&&&&&\\
\left[\ododlt\right]
&\left[\ododlt\right] &0&\left[\odtri\right]
&0 & 0& \left[\oddvalt\right]
&0&0&0&0 \\
&&&&&&&&&&\\
\hline
&&&&&&&&&&\\
\left[\tritri\right]
&0&\left[\tritri\right]&0
&\left[\triod\right]&0&0
&\left[\tridva\right]&0&0&0 \\
&&&&&&&&&&\\
\hline
&&&&&&&&&&\\
\left[\odtri\right]
&0&\left[\odtri\right]&0
&\left[\ododlt\right]&0&0
&\left[\oddvalt\right]&0&0&0 \\
&&&&&&&&&&\\
\hline
&&&&&&&&&&\\
\left[\triod\right]
&\left[\triod\right]&0&\left[\tritri\right]
&0&0&\left[\tridva\right]
&0&0&0&0 \\
&&&&&&&&&&\\
\hline
&&&&&&&&&&\\
\left[\dvaodlt\right]
&\left[\dvaodlt\right]&0&\left[\dvatri\right]
&0&0&\left[\dvadvalt\right]+ 
&0&0&0&0 \\
&&&&&&+ \left[\zigzag\right]&&&&\\
&&&&&&&&&&\\
\hline
&&&&&&&&&&\\
\left[\oddvalt\right]
&0&0&0
&0&3\left[\ododlt\right]&0&0
&3\left[\odtri\right]&\left[\oddvalt\right]&2\left[\oddvalt\right] \\
&&&&&&&&&&\\
\hline
&&&&&&&&&&\\
\left[\tridva\right]
&0&0&0
&0&3 \left[\triod\right]&0
&0&3\left[\tritri\right]&\left[\tridva\right]&2\left[\tridva\right] \\
&&&&&&&&&&\\
\hline
&&&&&&&&&&\\
\left[\dvatri\right]
&0&\left[\dvatri\right]&0
&\left[\dvaodlt\right]&0&0
&\left[\dvadvalt\right]+&0&0&0 \\
&&&&&&&+ \left[\zigzag\right]&&&\\
&&&&&&&&&&\\
\hline
&&&&&&&&&&\\
\left[\dvadvalt\right]
&0&0&0
&0&\left[\dvaodlt\right]&0
&0&\left[\dvatri\right]&\left[\dvadvalt\right]& \left[\zigzag\right]\\
&&&&&&&&&&\\
\hline
&&&&&&&&&&\\
\left[\zigzag\right]
&0&0&0
&0&2\left[\dvaodlt\right]&0
&0&2\left[\dvatri\right]&\left[\zigzag\right]
&2\left[\dvadvalt\right]+ \\
&&&&&&&&&&+ \left[\zigzag\right]\\
&&&&&&&&&&\\
\hline
\end{array}
$
}}
\end{example}

The sum $\mathfrak{e}_n^B= \sum\frac{\Gamma}{|\Aut(\Gamma)|}$ over the set of simple graphs $\Gamma$ of degree $n$
is the identity element of
algebra $B_n$. Denote through
$l_n^B:B\rightarrow\mathbb{C}$ the linear functional which is equal to
$\frac{1}{|\Aut(\Gamma)|}$ on the simple graphs $\Gamma$ and vanishes on all other graphs.

The involution $(L,E,R)\mapsto (L,E,R)^*=(R,E,L)$ induces an anti-isomorphism $*: B_n\rightarrow B_n$ (i.e. $(ab)^*=b^*a^*$).

\subsection{Homomorphisms of algebras}

Let $\mathfrak{N}$ be the set of all partitions of the set $\mathfrak{M}$ into non-empty subsets. Consider a vector
space $H_n$ over $\mathbb{C}$ with a basis $\mathfrak{N}$.

Construct following \cite{AN2} a homomorphism $\varrho: B_n\rightarrow\End(H_n)$. To this end, one suffices to
determine the image $\varrho(\Gamma)$ of the graph $\Gamma=(L,E,R)$. The vertices $L$ and $R$ give rise to
partitions of the set of edges $E$ into subsets $\sigma_L,\sigma_R$. The bijection $\chi:E\rightarrow\mathfrak{M}$
maps them to the partitions $\chi(\sigma_L), \chi(\sigma_R)\in\mathfrak{N}$. Denote through
$\varrho_\chi(\Gamma)\in\End(H_n)$ the endomorphism mapping to zero all the partitions from $\mathfrak{N}$ not equal
to $\chi(\sigma_R)$, and $\varrho_{\chi}(\chi(\sigma_R))=\chi(\sigma_L)$. Denote through $\varrho(\Gamma)\in\End(H_n)$
the sum of endomorphisms $\varrho_{\chi}$ over all bijections $\chi:E\rightarrow\mathfrak{M}$.
The map $\varrho:B_n\rightarrow\End(H_n)$ is an injective homomorphism of algebras.

The natural action of group $S_n$ on $\mathfrak{M}$ gives rise to a natural action of group $S_n$ on $\mathfrak{N}$ and,
hence, a homomorphism of the group algebra
$\tilde{\phi}_n:G_n\rightarrow\End(H_n)$. The image $\chi(B_n)\in\End(H_n)$ consists of all endomorphisms commuting with
$\tilde{\phi}_n(G_n)$ and, in particular, $\tilde{\phi}_n(A_n)\subset \chi(B_n)$. This allows one to define the
homomorphism $\phi_n=\varrho^{-1}\tilde{\phi}_n: A_n\rightarrow B_n$.

\begin{example}
$$\phi_1[1]=\left[\ododlo\right]$$
$$\phi_2[1,1]=\left[\dvadva\right]+\left[\ododld\right]$$
$$\phi_2[2]=\left[\dvadva\right]+\left[\ododld\right]$$
\end{example}

\begin{theorem} \label{t2.1} \cite{AN2} The set $((A_n,l^A_n),(B_n,l^B_n),\phi_n,U_n,\star)$ forms an equipped \CF algebra.
\end{theorem}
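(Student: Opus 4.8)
The plan is to verify, one by one, every axiom in the definition of an equipped Cardy-Frobenius algebra for the quintuple $((A_n,l^A_n),(B_n,l^B_n),\phi_n,U_n,\star)$. Since this is a finite-dimensional statement (each $A_n$, $B_n$ is finite-dimensional), no regularization subtleties arise and the work is purely structural. First I would record the easy ingredients. The pair $(A_n,l^A_n)$ is the center of the group algebra $G_n$ with the standard functional picking out the identity class, and it is a classical fact that this is a commutative Frobenius pair: the bilinear form $(\Delta',\Delta'')=l^A_n(\Delta'*\Delta'')$ is nondegenerate because $l^A_n(\Delta'*\Delta'')$ counts (up to the $1/n!$ normalization) the number of ways to write the identity as a product, which pairs each class $\Delta$ nontrivially with its inverse class $\Delta^{-1}$. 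For $(B_n,l^B_n)$ I would use the injective homomorphism $\varrho:B_n\to\End(H_n)$ together with the functional $l^B_n$ supported on simple graphs; nondegeneracy of the induced form is the first thing to check carefully, and it is cleanest to verify it through the faithful representation $\varrho$, where $l^B_n$ becomes (a multiple of) a trace.

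\emph{Next} I would check the three structural conditions. That $\phi_n:A_n\to B_n$ is an algebra homomorphism is given by its very construction as $\varrho^{-1}\tilde\phi_n$, and that $\phi_n(A_n)$ lands in the center of $B_n$ follows because $\tilde\phi_n(A_n)$ consists of operators commuting with all of $\tilde\phi_n(G_n)$, while $\varrho(B_n)$ is exactly the commutant of $\tilde\phi_n(G_n)$; hence $\tilde\phi_n(A_n)$ lies in the center of $\varrho(B_n)$, and pulling back along the isomorphism $\varrho$ gives centrality in $B_n$. The compatibility of the involutions --- that $\star$ is an involutive anti-automorphism on each algebra fixing the functionals and intertwining with $\phi_n$, i.e. $\phi_n(x^\star)=\phi_n(x)^\star$ --- I would verify by transporting everything to $\End(H_n)$, where $\star$ on $B_n$ corresponds to the transpose (adjoint) under $\varrho$ and $\star$ on $A_n$ is the identity; the required intertwining then reduces to the observation that elements of $\tilde\phi_n(A_n)$ are self-adjoint with respect to the natural form on $H_n$, which holds because conjugation classes are stable under inversion.

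\emph{The central and hardest condition} is the Cardy relation $(\phi_n^*(b'),\phi_n^*(b''))_{A_n}=\tr K_{b'b''}$, where $K_{b'b''}(b)=b'bb''$, together with the two equipping identities $U_n^2=K_{A_n}^\star$ and $\phi_n(U_n)=K_{B_n}^\star$. For the Cardy relation I would compute both sides in the representation $\varrho$: the right-hand side is a trace of the two-sided multiplication operator on $B_n\cong\varrho(B_n)$, and the left-hand side is a form between the adjoints $\phi_n^*$. The standard strategy is to pass to the commuting pair (center $A_n$ and commutant $\varrho(B_n)$) inside $\End(H_n)$ and to invoke the double-centralizer structure: the trace of $K_{b'b''}$ decomposes over the isotypic blocks of the $S_n$-action on $H_n$, and matching these blocks against the Frobenius form on $A_n$ is exactly the content of the Cardy condition. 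For the equipping element, I would use that $U_n$ is by definition the sum of squares of fixed-point-free permutations, which is designed so that $\phi_n(U_n)$ reproduces the Casimir $K_{B_n}^\star$; the identity $U_n^2=K_{A_n}^\star$ I would check by a direct computation in the center of $G_n$, comparing $U_n*U_n$ with the Casimir $K_{A_n}=\sum F^{ij}e_i*e_j$ of the Frobenius pair $(A_n,l^A_n)$.

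\emph{The main obstacle} I expect is the Cardy relation: it is the one axiom that genuinely couples the two algebras through the trace of a two-sided multiplication operator, and making the block-decomposition argument precise requires a careful bookkeeping of automorphism-group factors (the normalizing weights $|\Aut(\Gamma)|^{-1}$ appearing throughout the definitions of $X_\Gamma$, $V[\Gamma]$, $\mathfrak{e}_n^B$ and $\sigma_n$). Since the statement cites \cite{AN2}, I would expect the cleanest route to be to reduce everything to the abstract double-centralizer situation --- where $A_n$ is the center of a semisimple algebra and $B_n$ its commutant in a module --- and then quote or reprove the general theorem that such a configuration, equipped with the inversion involution, always yields an equipped Cardy-Frobenius algebra.
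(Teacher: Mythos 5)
The paper contains no proof of this theorem at all: the citation \cite{AN2} is part of the statement, and the result is imported wholesale from Alexeevski--Natanzon. So there is no internal argument to compare yours against; the only fair comparison is with the strategy of \cite{AN2}, and your plan does track it. The venue you choose --- the faithful homomorphism $\varrho: B_n\rightarrow\End(H_n)$, the identification of $\varrho(B_n)$ with the commutant of $\tilde\phi_n(G_n)$, and the double-centralizer decomposition of $H_n$ --- is indeed the mechanism behind the theorem, and the verifications you actually carry out are correct: nondegeneracy of $(\cdot,\cdot)_{A_n}$ from inversion-stability of conjugacy classes, centrality of $\phi_n(A_n)$ from the commutant description, and the compatibility of the involutions (on $A_n$ the map $\star$ is the identity, which is an anti-automorphism since $A_n$ is commutative).

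However, as a proof the proposal has a genuine gap, and it sits exactly where you yourself locate the difficulty. The Cardy relation $(\phi_n^*(b'),\phi_n^*(b''))_{A_n}=\tr K_{b'b''}$ and the two equipping identities $U_n^2=K_{A_n}^\star$ and $\phi_n(U_n)=K_{B_n}^\star$ are announced but never established: ``decompose over isotypic blocks and match'' and ``check by a direct computation'' are programs, not arguments, and your closing suggestion --- reduce to an abstract double-centralizer theorem and then ``quote or reprove'' it --- is precisely the move the paper makes by citing \cite{AN2}, so nothing has been proved that the paper does not already assume. Two concrete points would need repair even at the level of the plan. First, for this theorem $U_n$ is the element defined in Section 2.1, the sum of squares of \emph{all} elements of $S_n$; you instead use the Section 1.4 description (squares of fixed-point-free permutations only), which is a different central element, and the equipping identities must be verified for the former. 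Second, nondegeneracy of $(\cdot,\cdot)_{B_n}$ is not automatic from faithfulness of $\varrho$: the functional $l^B_n$ is supported on simple graphs with weights $1/|\Aut(\Gamma)|$, and identifying the induced form with a trace form on $\varrho(B_n)$ requires exactly the automorphism-factor bookkeeping you flag but do not perform.
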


\vspace{2ex}

\section{Proof of the main theorem}

\subsection{Restricted multiplication}
First consider a few examples of the graph-operators and graph-variables and action of the graph-operators on the
graph-variables.

\begin{example} \label{ex2.3}
$$X_{\left[\ododlo\,\right]} = \sum_{a,b} X_{ab} \quad X_{\left[\dvadva\right]} =
\frac{1}{2}\sum_{\stackrel{a_1\neq a_2}{b_1\neq b_2}}^N
X_{a_1b_1}X_{a_2b_2}$$
$$X_{\left[\oddva\right]} =
\frac{1}{2}\sum_{\stackrel{a}{b\neq c}} X_{ab}X_{ac} \quad X_{\left[\dvaod\right]} =
\frac{1}{2}\sum_{\stackrel{a\neq b}{c}} X_{ac}X_{bc} \quad X_{\left[\ododld\right]} =\frac{1}{2}\sum_{a,b=1}X_{ab}^2$$
\end{example}

\begin{example} \label{ex2.2}
$$V\left[\ododlo\,\right] = \sum_{a,b} D_{ab} \quad
V\left[\dvadva\right] =
\frac{1}{2}\sum_{\stackrel{a_1\neq a_2}{b_1\neq b_2}}
: D_{a_1b_1}D_{a_2b_2}:$$
$$V\left[\oddva\right] =
\frac{1}{2}\sum_{\stackrel{a}{b\neq c}} : D_{ab}D_{ac}: \quad
V\left[\dvaod\right] =
\frac{1}{2}\sum_{\stackrel{a\neq b}{c}} : D_{ac}D_{bc}: \quad
V\left[\ododld\right] =\frac{1}{2}\sum_{a,b=1} : D_{ab}^2:$$
\end{example}

\begin{example} $$V\left[\ododlo\,\right](X_{\left[\ododlo\,\right]})=
\lim\limits_{N\rightarrow\infty}\frac{(N-1)!}{N!}\sum_{a,b}^N D_{ab} (\sum_{a',b'}^NX_{a'b'})= X_{\left[\ododlo\,\right]}$$
$$V\left[\dvadva\,\right](X_{\left[\dvadva\,\right]})= \lim\limits_{N\rightarrow\infty}\frac{(N-2)!}{N!}
\frac{1}{2}\sum^N_{\stackrel{a_1\neq a_2}{b_1\neq b_2}}
: D_{a_1b_1}D_{a_2b_2}: (\frac{1}{2}\sum_{\stackrel{a'_1\neq a'_2}{b'_1\neq b'_2}}^N
X_{a'_1b'_1}X_{a'_2b'_2})= X_{\left[\dvadva\,\right]}$$
$$V\left[\ododld\,\right](X_{\left[\ododld\,\right]})= \lim\limits_{N\rightarrow\infty}
\frac{(N-1)!}{N!}\frac{1}{2}\sum^N_{a,b=1}
: D_{ab}^2: (\frac{1}{2}\sum^N_{a',b'=1}X_{a'b'}^2)= X_{\left[\ododld\,\right]}$$
\end{example}

\textit{The restricted product} $\Gamma_1*\Gamma_2$ of the graph-operators $\Gamma_1$ and $\Gamma_2$ of degree $n$
is called the operator $pr_n(\Gamma_1 \Gamma_2)$, i.e. the projection onto the subspace of operators of degree $n$
the regularized product of differential operators $V[\Gamma_1]V[\Gamma_2]= \lim\limits_{N\rightarrow\infty}
V[\Gamma_1]^NV[\Gamma_2]^N$. This operation transforms $V_n$ into an associative algebra.

\begin{remark} When defining the product $*$ of graph-operators $V[\Gamma]$ and their action on the graph-variables,
one does not need any regularization if to restrict oneself to the graphs $\Gamma$ with the same number of the left
and right vertices. Then, the operator $V[\Gamma]$ is defined to act as the differential operator $\check{V}[\Gamma]=
\frac{1}{|\Aut(\Gamma)|}\sum\mathcal{D}$, where the sum runs over all
operators $\mathcal{D}= :D_{a_1b_1}\dots D_{a_nb_n}:$ such that $\Gamma(\mathcal{D})=\Gamma$, the sets of pairwise
distinct numbers among the sets $\{a_1,\dots,a_n\}$ and $\{b_1,\dots,b_n\}$ being coincident.
\end{remark}

Let us continue the correspondence between graphs and endomorphisms described in the previous subsection to the isomorphism
of the vector spaces $\theta_n: V_n\rightarrow \varrho(B_n)\in\End(H_n)$.

\begin{lemma}\label{l3.1} The map $\theta_n$ is an isomorphism of algebras, $V\left[\Gamma\right](X_{\left[\Gamma'\right]}) =
X_{\left[\Gamma*\Gamma'\right]}$ and $V\left[\Gamma\right]*V\left[\Gamma'\right] = V_{\left[\Gamma*\Gamma'\right]}$.
\end{lemma}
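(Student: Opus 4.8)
The three assertions are three facets of one fact: the differentiation action of a graph-operator on a graph-variable realizes left multiplication in the graph algebra $B_n$, and $\varrho$ embeds that algebra faithfully into $\End(H_n)$. The plan is therefore to prove the middle identity $V[\Gamma](X_{[\Gamma']})=X_{[\Gamma*\Gamma']}$ first, since it is the bridge from which the other two follow almost formally. Throughout, $\theta_n$ is the extension $V[\Gamma]\mapsto\varrho(\Gamma)$ of the graph-to-endomorphism correspondence, so that the first assertion amounts to saying that $V[\Gamma]\mapsto\Gamma$ is an algebra isomorphism $V_n\to B_n$ followed by the injective homomorphism $\varrho$.

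\emph{Step 1 (the action formula).} I would establish $V[\Gamma](X_{[\Gamma']})=X_{[\Gamma*\Gamma']}$ by expanding the differential action directly. Since $D_{ab}(X_{cd})=\delta_{bc}X_{ad}$, in a normally ordered monomial $\mathcal{D}={:}D_{a_1b_1}\cdots D_{a_nb_n}{:}$ representing $V[\Gamma]$ each factor $D_{a_ib_i}$ deletes an edge of $\Gamma'$ incident to the left-vertex labelled $b_i$ and recreates it at the left-vertex labelled $a_i$, keeping the right endpoint fixed. The first task is to check that a term survives precisely when the right-vertices $R(\Gamma)$ are matched to left-vertices $L(\Gamma')$ by a valence-preserving map $\chi\in\Hom(R(\Gamma),L(\Gamma'))$, and that the choice of which edges are consumed is exactly a resolution $F\in M_\chi$; the resulting monomial then has incidence graph $(L(\Gamma),\overline{F},R(\Gamma'))$, matching the summands of the $B_n$-product term by term. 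What remains is to match coefficients: the normalisations $\tfrac{1}{|\Aut(\Gamma)|}$, $\tfrac{1}{|\Aut(\Gamma')|}$, the multiplicity with which each monomial of $X_{[\Gamma']}$ is hit, and the regularising factor $\tfrac{(N-|R(\Gamma)|)!}{N!}$ must together reproduce the structure constant $\tfrac{|\Aut((R,\overline{F},L'))|}{|\Aut_F((L,\overline{F},R'))|}$. This coefficient bookkeeping, together with checking that the weighted count stabilises for $N>|E(\Gamma)|$ so that the limit exists and is $N$-independent, is the main obstacle of the whole lemma; everything else is soft.

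\emph{Step 2 (faithfulness and $\theta_n$).} Because $B_n$ is unital with identity $\mathfrak{e}_n^B$, its left regular representation is faithful, so a degree-$n$ graph-operator is determined by its action on all graph-variables. Step 1 identifies the matrix of $V[\Gamma]$ in the graph-variable basis with the matrix of left multiplication by $\Gamma$ in $B_n$, which $\varrho$ carries injectively into $\End(H_n)$. Hence the operators $\{V[\Gamma]\}$ are linearly independent and $\theta_n\colon V[\Gamma]\mapsto\varrho(\Gamma)$ is a well-defined linear isomorphism of $V_n$ onto $\varrho(B_n)$.

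\emph{Step 3 (multiplicativity).} Applying Step 1 twice, for every $\Gamma'$ one gets
\[
V[\Gamma_1]\bigl(V[\Gamma_2](X_{[\Gamma']})\bigr)=V[\Gamma_1](X_{[\Gamma_2*\Gamma']})=X_{[\Gamma_1*(\Gamma_2*\Gamma')]}=X_{[(\Gamma_1*\Gamma_2)*\Gamma']},
\]
the last equality by associativity of $B_n$. The right-hand side is exactly the action of $V[\Gamma_1*\Gamma_2]$, while the restricted product $V[\Gamma_1]*V[\Gamma_2]=pr_n(V[\Gamma_1]V[\Gamma_2])$ acts on degree-$n$ graph-variables by this same composition; by the faithfulness of Step 2 this forces $V[\Gamma_1]*V[\Gamma_2]=V[\Gamma_1*\Gamma_2]$, the third assertion. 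Finally, since $\varrho$ is an algebra homomorphism, $\theta_n(V[\Gamma_1]*V[\Gamma_2])=\varrho(\Gamma_1*\Gamma_2)=\varrho(\Gamma_1)\varrho(\Gamma_2)=\theta_n(V[\Gamma_1])\theta_n(V[\Gamma_2])$, so $\theta_n$ is an isomorphism of algebras, completing the first assertion.
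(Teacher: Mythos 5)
Your proposal follows essentially the same route as the paper: its proof likewise expands a monomial operator $D={:}D_{a_1b_1}\cdots D_{a_nb_n}{:}$ against a monomial of $X_{[\Gamma']}$, identifies the surviving terms as exactly the monomials whose graphs are summands of the $B_n$-product $\Gamma*\Gamma'$, and then deduces the action formula, the homomorphicity of $\theta_n$, and $V[\Gamma]*V[\Gamma']=V[\Gamma*\Gamma']$ from that single computation. Your Steps 2--3 merely make explicit the faithfulness argument that the paper compresses into one ``Hence,'' and the coefficient bookkeeping you flag as the main obstacle is precisely the point the paper's own proof also asserts rather than verifies, so you are correct to the same standard of rigor as the original.
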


\begin{proof} Consider the monomial $D=:D_{a_1b_1}\dots D_{a_nb_n}:$, where $\Gamma(D)=\Gamma$ and the monomial
$X\left[\Gamma'\right]=X_{c_1d_1}\dots X_{c_nd_n}$, where $\Gamma(X)=\Gamma'$. Then, $D(X)$ is a linear combination
of all monomials of the form $X_{s_1d_1}\dots X_{s_nd_n}$ such that the set of numbers $\{s_1,\dots,s_n\}$ coincides
with the set of numbers $\{a_1,\dots,a_n\}$ and $\Gamma(X_{s_1d_1}\dots X_{s_nd_n})$ is one of the summands in
expansion of the product
$\Gamma*\Gamma'$. Thus, $V\left[\Gamma\right](X_{\left[\Gamma'\right]}) = X_{\left[\Gamma*\Gamma'\right]}$. Hence, the
homomorphicity of the map $\theta_n$ and $V\left[\Gamma\right]*V\left[\Gamma'\right] = V_{\left[\Gamma*\Gamma'\right]}$.
\end{proof}

Denote through $l^V_n: l^V_n\rightarrow\mathbb{C}$ the linear functional such that $l^V_n(V[\Gamma])= l^B_n(\Gamma)$.
From Lemma \ref{l3.1} follows

\begin{theorem}\label{t3.1} The correspondence $\Gamma\mapsto W[\Gamma]$ gives rise to an isomorphism of Frobenius
pairs $\mathcal{V}_n:(V_n,l^V_n)\rightarrow(B_n,l^B_n)$
\end{theorem}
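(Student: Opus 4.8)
The plan is to establish the isomorphism by combining Lemma~\ref{l3.1} with the already-constructed homomorphism $\varrho:B_n\rightarrow\End(H_n)$. The key observation is that I already have three spaces in play: the graph-operators $V_n$, the bipartite-graph algebra $B_n$, and the image $\varrho(B_n)\subset\End(H_n)$. Lemma~\ref{l3.1} tells me $\theta_n:V_n\rightarrow\varrho(B_n)$ is an isomorphism of algebras, and since $\varrho$ itself is an injective homomorphism (as recalled in the previous subsection), the composite $\mathcal{V}_n=\varrho^{-1}\circ\theta_n:V_n\rightarrow B_n$ is an algebra isomorphism sending $V[\Gamma]\mapsto\Gamma$. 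So the algebra-isomorphism half of the claim is essentially immediate from what precedes the theorem.

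The remaining content is that $\mathcal{V}_n$ is an isomorphism of \emph{Frobenius pairs}, i.e. that it intertwines the functionals $l^V_n$ and $l^B_n$. Here I would simply invoke the defining property of $l^V_n$: it was introduced precisely by $l^V_n(V[\Gamma])=l^B_n(\Gamma)$. First I would check that this prescription is well-defined and linear — the graph-operators $\{V[\Gamma]\}$ over degree-$n$ graphs $\Gamma$ form a basis of $V_n$ (this is built into the definition of $V_n$ as the space they generate, together with their linear independence, which follows from the injectivity of $\theta_n$), so the functional is unambiguously determined by its values on that basis. Then, by construction, $l^V_n(V[\Gamma])=l^B_n(\Gamma)=l^B_n(\mathcal{V}_n(V[\Gamma]))$ for every basis element, and by linearity $l^V_n=l^B_n\circ\mathcal{V}_n$ on all of $V_n$.

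Finally I would assemble these two facts into the statement that $\mathcal{V}_n$ is an isomorphism of Frobenius pairs. Since $\mathcal{V}_n$ is a bijective algebra homomorphism, the nondegenerate bilinear form $(b_1,b_2)_{B_n}=l^B_n(b_1 b_2)$ pulls back to $(v_1,v_2)_{V_n}=l^V_n(v_1 v_2)$: indeed, using multiplicativity of $\mathcal{V}_n$ and the functional-intertwining, $l^V_n(v_1 v_2)=l^B_n(\mathcal{V}_n(v_1)\mathcal{V}_n(v_2))=(\mathcal{V}_n(v_1),\mathcal{V}_n(v_2))_{B_n}$, so the form on $V_n$ is nondegenerate because it is the transport of the nondegenerate form on $B_n$ through an isomorphism. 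This also certifies that $(V_n,l^V_n)$ is genuinely a Frobenius pair, not merely that the map preserves some structure.

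I do not expect a serious obstacle, since the theorem is really a bookkeeping corollary of Lemma~\ref{l3.1} and the prior construction of $\varrho$. The one point requiring a little care — and the likeliest place for a gap — is the well-definedness of $l^V_n$, which rests on the graph-operators $V[\Gamma]$ being linearly independent; I would make explicit that this independence is exactly the injectivity of $\theta_n$ furnished by Lemma~\ref{l3.1}, so that the assignment $V[\Gamma]\mapsto l^B_n(\Gamma)$ extends unambiguously to a linear functional on $V_n$.
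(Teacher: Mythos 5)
Your proposal is correct and follows the paper's own route: the paper gives no separate argument for this theorem beyond stating that it ``follows from Lemma~\ref{l3.1}'', and your write-up simply unpacks that — the algebra isomorphism $V_n\cong B_n$ obtained by composing $\theta_n$ with $\varrho^{-1}$ (using injectivity of $\varrho$), the functional matching built into the definition of $l^V_n$, and nondegeneracy of the form on $V_n$ by transport from $(B_n,l^B_n)$. The extra care you take with well-definedness of $l^V_n$ (linear independence of the $V[\Gamma]$) is a detail the paper leaves implicit, but it is consistent with, not a departure from, its argument.
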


\vspace{2ex}

\subsection{Algebra of poligraph-operators}

The direct product of algebras $V_i$ forms the algebra $V_{\uparrow}=\prod\limits_{i=1}^{\infty}V_i$, whose
elements we call  \textit{poligraph-operators}. Thus, poligraph-operator is an infinite sequence $v=(v_1,v_2,\dots)$,
where $v_i\in V_i$ with componentwise restricted product.

Denote through $X_n$ the linear space generated by the graph-variables of degree $n$.
\textit{Poligraph-variable} is an infinite sequence $x=(x_1,x_2,\dots)$, where $x_i\in X_i$. Denote through
$X_{\uparrow}= \prod\limits_{i=1}^{\infty}X_i$ the vector space generated by the poligraph-variables. Let action of the
poligraph-operator on
the poligraph-variable be given by the formula $v(x)=(v_1, v_2, \dots)(x_1, x_2, \dots)= (v_1(x_1), v_2(x_2), \dots)$.
Then $(v'*v'')=v'(v''(x))$.

The correspondence $V[\Gamma]\mapsto V[\rho_n(\Gamma)]$ gives rise to a homomorphism $\sigma_n:V_m\rightarrow V_n$ of vector
spaces. The set of these homomorphisms gives rise to a homomorphism of vector spaces $\sigma_\uparrow: V_m\rightarrow V_{\uparrow}$,
which, in its turn, gives rise to a homomorphism of vector spaces $\sigma_\uparrow: V\rightarrow V_{\uparrow}$.

\begin{theorem}\label{t3.2} The map  $\sigma_\uparrow: V\rightarrow V_{\uparrow}$ is an isomorphism of algebras.
\end{theorem}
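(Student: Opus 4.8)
The map $\sigma_\uparrow : V \to V_\uparrow$ is built component-wise out of the maps $V_m \to V_n$ sending $V[\Gamma] \mapsto V[\rho_n(\Gamma)]$, where $\rho_n$ implements the standard-extension operation $\sigma_n(\Gamma)$ of Section 1.3. The whole point is that an element of $V$ is a formal series $v_1 + v_2 + \dots$ with $v_m \in V_m$, but these summands do \emph{not} act independently on graph-variables: a degree-$m$ graph-operator acts nontrivially on graph-variables of \emph{every} degree $n \ge m$ via the extension rule $V[\Gamma](X_{[\Gamma']}) = V[\sigma_{|\Gamma'|}(\Gamma)](X_{[\Gamma']})$. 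The algebra $V_\uparrow = \prod_i V_i$, by contrast, is the honest direct product with componentwise product, whose $i$-th slot acts only on degree-$i$ graph-variables. So $\sigma_\uparrow$ records, slot by slot, \emph{how} a given formal operator in $V$ actually acts in each fixed degree. The plan is to show this bookkeeping map is a bijection and respects the product.

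\textbf{Injectivity and surjectivity.} First I would show $\sigma_\uparrow$ is a linear isomorphism of vector spaces. Fix a total ordering of graphs compatible with $|\Gamma|$. The key structural observation is that $\rho_n(\Gamma) = \sigma_n(\Gamma)$ has leading term (in the $n=|\Gamma|$ slot) equal to $\Gamma$ itself, so the map from $V$ to $V_\uparrow$ is ``upper-triangular with identity on the diagonal'' with respect to this grading: the degree-$n$ component of $\sigma_\uparrow(v_1+v_2+\dots)$ is $v_n$ plus contributions coming from the $v_m$ with $m < n$ pushed up by standard extensions. I would make this precise and invert it degree by degree. Given a target sequence $(w_1, w_2, \dots) \in V_\uparrow$, one solves recursively for $v_n$: set $v_1 = w_1$, then determine $v_n$ so that its own contribution to slot $n$ corrects for the already-determined lower $v_m$. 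This both exhibits a preimage (surjectivity) and shows the preimage is unique (injectivity), since the diagonal coefficient is $1$ and each step is forced.

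\textbf{Multiplicativity.} To see $\sigma_\uparrow$ is an algebra homomorphism, I would use the defining actions on poligraph-variables. The product in $V$ is defined so that $(v' * v'')$ acts on every graph-variable as $v'(v''(X_{[\Gamma]}))$; the product in $V_\uparrow$ is componentwise, with the $n$-th slot acting only on degree-$n$ variables as $v'_n(v''_n(x_n))$. By Lemma \ref{l3.1}, inside a fixed degree $n$ the action $V[\Gamma](X_{[\Gamma']}) = X_{[\Gamma * \Gamma']}$ is faithful: the graph-operator is \emph{determined} by its action on the graph-variables of the same degree (this is exactly the content of the isomorphism $\theta_n$ onto $\varrho(B_n) \subset \End(H_n)$). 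Hence two poligraph-operators that act identically on every graph-variable are equal slot by slot. Applying this, $\sigma_\uparrow(v' * v'')$ and $\sigma_\uparrow(v') * \sigma_\uparrow(v'')$ act identically in each degree $n$ — both equal $v'(v''(-))$ restricted to degree $n$ — and therefore coincide.

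\textbf{Main obstacle.} The delicate point is the compatibility of the extension rule $\rho_n$ with the product, i.e.\ checking that pushing a product up to degree $n$ via $\sigma_n$ equals the product (in slot $n$) of the separately pushed-up factors. This is where the precise combinatorial definition of $\sigma_n(\Gamma)$ with its ratio of automorphism orders $\tfrac{|\Aut(\hat\Gamma)|}{|\Aut(\hat\Gamma\setminus\Gamma)|\,|\Aut(\Gamma)|}$ is essential: the weights are rigged exactly so that standard extension is an algebra map in each degree. I expect this to require invoking the stabilized ($N \to \infty$) action and the fact, already noted in Section 1.3, that the coefficients of $V^N[\Gamma](X^N_{[\Gamma']})$ stabilize for $N > |E(\Gamma)|$, so that the componentwise actions genuinely agree with the single action in $V$. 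Once this compatibility is in hand, the faithfulness of $\theta_n$ upgrades the equality of actions to equality of operators, completing the proof.
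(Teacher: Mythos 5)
Your proposal is correct and follows essentially the same route as the paper: multiplicativity is established exactly as in the paper's proof, by letting both $\sigma_\uparrow(v'\circ v'')$ and $\sigma_\uparrow(v')*\sigma_\uparrow(v'')$ act on poligraph-variables concentrated in a single degree $n$ (where the $n$-th slot of $\sigma_\uparrow$ tautologically reproduces the action of $V$ on degree-$n$ graph-variables, since that action is \emph{defined} via the standard extensions $\sigma_n$), and then using the faithfulness of the degree-$n$ action supplied by Lemma \ref{l3.1} to pass from equality of actions to equality of operators. Your explicit triangular/recursive inversion for bijectivity is simply a fleshed-out version of what the paper dismisses with ``monomorphicity is immediate'' and ``epimorphicity follows from Theorem \ref{t3.1}'' (the same recursion appears as the spanning equality in the proof of Theorem \ref{t3.3}), and the ``compatibility of extensions with products'' you flag as the main obstacle is in fact automatic from the definitions rather than a separate lemma to verify.
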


\begin{proof} Immediately from the definitions it follows that $pr_n(\sigma_\uparrow (V[\Gamma]))= \sigma_n(V[\Gamma])$.
Let $x_n\in X_n$ and $x=\sigma_\uparrow(x_n)$ is the
graph-variable with all zero components but $n$-th, and the $n$-th component is equal to $x_n$. Then
$\sigma_{\uparrow}(V[\Gamma_1]\circ V[\Gamma_2]))(x)=\sigma_\uparrow((V[\Gamma_1]\circ V[\Gamma_2])(x_n))=
\sigma_\uparrow(V[\Gamma_1](V[\Gamma_2](x_n))) = \\ \sigma_\uparrow(V[\Gamma_1])(\sigma_\uparrow(V[\Gamma_2])(x)) =
(\sigma_\uparrow(V[\Gamma_1]))*(\sigma_\uparrow(V[\Gamma_2]))(x)$.
Thus, $\sigma_\uparrow (V[\Gamma_1]\circ V[\Gamma_2])) = \sigma_\uparrow (V[\Gamma_1])*\sigma_\uparrow (V[\Gamma_2])$.
Monomorphicity of the homomorphism $\sigma_\uparrow $ is immediate. Epimorphicity follows from theorem \ref{t3.1}.
\end{proof}

\begin{example} $$\sigma_\uparrow(V\left[\ododlo\,\right]\circ V\left[\ododlo\,\right])= \sigma_\uparrow(V\left[\ododlo\,\right]+
2V\left[\dvadva\right])= (V\left[\ododlo\,\right],4V\left[\dvadva\right],\dots)$$
$$\sigma_\uparrow(V\left[\ododlo\,\right])* \sigma_\uparrow(V\left[\ododlo\,\right])= (V\left[\ododlo\,\right],
2V\left[\dvadva\right],\dots)* (V\left[\ododlo\,\right], 2V\left[\dvadva\right],\dots)= $$
$$(V\left[\ododlo\,\right], 4V\left[\dvadva\right],\dots)$$
\end{example}

\vspace{2ex}

\subsection{Algebra of cut-and-join operators}

We start with a few examples of the cut-and-join operators and their products.

\begin{example} \label{ex1.1}
$D_e=W([1])=\sum\limits_{a\in\mathbb{N}} :D_{aa}:$ \quad
$W([11])=\frac{1}{2}\sum\limits_{a,b\in\mathbb{N}} :D_{aa}D_{bb}:\quad$
$W([2])=\frac{1}{2}\sum\limits_{a,b\in\mathbb{N}} :D_{ab}D_{ba}:\quad$
$W([21])=\frac{1}{2}\sum\limits_{a,b,c\in\mathbb{N}} :D_{ab}D_{ba}D_{cc}:\quad$
$W([3])=\frac{1}{2}\sum\limits_{a,b,c\in\mathbb{N}} :D_{ab}D_{bc}D_{ca}:$.
\end{example}

\begin{example} \label{ex1.2}

$W[1]\circ W[1] = \sum_{a,b=1}^N D_{aa}D_{bb}
= \sum_{a,b}^N :D_{aa}D_{bb}:\ + \sum_a D_{aa}= \\
2W[11]  + W[1]$

\vspace{1ex}

$W[1]\circ W[2] = W[2]\circ W[1] = 2W[2] + W[21],$

\vspace{1ex}

$W[1]\circ W[11] = W[11]\circ W[1] = 2W[11] + 3W[111] $

$W[2]\circ W[2] = W[11] + 3W[3] + 2W[22],$

\vspace{1ex}

$W[2]\circ W[11] = W[11]\circ W[2] = W[2] + 2W[21] + W[211],$

\vspace{1ex}

$W[11]\circ W[11] = W[11] + 6W[111] + 6W[1111],$
\end{example}

Denote through $W_n$ the vector space generated by differential operators of the form $W(\Delta)$,
where $|\Delta|=n$. Introduce on $W_n$ the structure of associative algebra defining \textit{restricted multiplication}
$*$ with the equality $w^1*w^2= pr_n(w^1\circ w^2))$ for $w^i\in W_n$. Denote through
$W_{\uparrow}=\prod\limits_{n=1}^{\infty}W_n$ the direct product of algebras $W_n$ with the restricted multiplication $*$.

\vspace{1ex}

Define a linear operator $\rho_n:W_m\rightarrow W_n$ assuming that $\rho_n(W)=0$ at $n< m$ and
$\rho_n(W[\Delta])= \frac{k!}{m!(k-m)!}:D_e^{(n-m)}W[\Delta]:$, where $k$ is the number of unit lines
in the Young diagram $\Delta$, at $n\geq m$. The correspondence
$W\mapsto\rho_n(W)$ gives rise to a homomorphism of vector spaces $\rho_n: W_m\rightarrow W_n$. The set of
homomorphisms $\rho_n:W_m\rightarrow W_n$ gives rise to a homomorphism of vector spaces
$\rho_\uparrow: W_m\rightarrow W_{\uparrow}$. Continue the operator $\rho_\uparrow: W_m\rightarrow W_{\uparrow}$
up to the linear operator $\rho_\uparrow: W\rightarrow W_{\uparrow}$.

\begin{example} \label{ex1.5}
$\rho_\uparrow(W[1])=(W[1],2W[11],3W[111],\dots)$,

\vspace{1ex}

$\rho_\uparrow(W[2])= (W[2],W[21],W[211],\dots)$,

\vspace{1ex}

$\rho_\uparrow(W[11])=(W[11],3W[111],4W[1111],\dots)$,

\vspace{1ex}

$\rho_\uparrow(W[3])=(W[3],W[31],W[311],\dots)$,

\vspace{1ex}

$\rho_\uparrow(W[21])= (W[21],2W[211],3W[2111],\dots)$,

\vspace{1ex}

$\rho_\uparrow(W[111])= (W[111],4W[1111],5W[1111],\dots)$.
\end{example}

\begin{theorem}\label{t3.3} The map $\rho_\uparrow: W\rightarrow W_{\uparrow}$ is an isomorphism of algebras.
\end{theorem}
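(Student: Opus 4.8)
The plan is to mirror the proof of Theorem \ref{t3.2}, replacing graph-operators by cut-and-join operators and the standard extension $\sigma_n$ by the unit-row extension $\rho_n$. To begin, I would record that $pr_n(\rho_\uparrow(W[\Delta])) = \rho_n(W[\Delta])$ is immediate from the construction of $\rho_\uparrow$, so the $n$-th component of $\rho_\uparrow(w)$ is always $\rho_n(w)$; two poligraph-operators therefore coincide as soon as their images $\rho_n$ agree in every degree $n$.

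The heart of the matter, and the step I expect to be the main obstacle, is an analog of Lemma \ref{l3.1}: a degree-$m$ operator $w\in W_m$ and its extension $\rho_n(w)\in W_n$ must act identically on every graph-variable $x_n$ of degree $n$, that is $\rho_n(w)(x_n)=w(x_n)$. What makes this delicate is that the full composition $\circ$ does not preserve the polynomial degree (in Example \ref{ex1.2} one has $W[1]\circ W[1]=2W[11]+W[1]$, mixing degrees $1$ and $2$), whereas every $D_{ab}$, hence every cut-and-join operator, does preserve it; the purpose of $\rho_n$ is exactly to absorb this mismatch. To prove the identity I would use that $D_e=\sum_a :D_{aa}:$ is the Euler operator, acting on $X_n$ as multiplication by $n$, and argue by induction on $n-m$: since $W[1]\circ W(\Delta)$ decomposes into the unit-row extension of $W(\Delta)$ as its top-degree term together with a lower-degree remainder, evaluating both $W[1]\circ W(\Delta)$ and this decomposition on $x_n$ and using $W[1](y)=(\deg y)\,y$ pins down the required coefficients. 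Equivalently one computes the normal-ordered power $:D_e^{\,n-m}W(\Delta):=\frac{(k+n-m)!}{k!}\,W([\Delta,1^{\,n-m}])$, with $k=m_1(\Delta)$, and matches it against the definition of $\rho_n$. This is the only genuinely computational part.

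With the identity in hand, homomorphicity follows by the computation of Theorem \ref{t3.2}. I would take the poligraph-variable $x\in X_\uparrow$ supported in a single degree $n$, with $n$-th component $x_n$, let $W_\uparrow$ act on $X_\uparrow$ componentwise, and evaluate the $n$-th components of $\rho_\uparrow(w^1\circ w^2)(x)$ and of $(\rho_\uparrow(w^1)*\rho_\uparrow(w^2))(x)$; applying the identity twice, both reduce to $w^1(w^2(x_n))$. Since such $x$ span $X_\uparrow$ and the action of $W_n\cong A_n$ on $X_n$ is faithful, this yields $\rho_\uparrow(w^1\circ w^2)=\rho_\uparrow(w^1)*\rho_\uparrow(w^2)$. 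Finally I would treat bijectivity: monomorphicity is immediate, for if $w=\sum_m w_m\neq0$ and $n$ is the least degree with $w_n\neq0$ then $\rho_n(w)=\rho_n(w_n)=w_n\neq0$ because $\rho_n$ restricts to the identity on $W_n$; epimorphicity follows by solving the triangular system $v_n=\sum_{m\le n}\rho_n(w_m)$ recursively in $n$ (the diagonal $\rho_n|_{W_n}$ being the identity), so that every $(v_1,v_2,\dots)\in W_\uparrow$ has a unique preimage $w=\sum_m w_m\in W$ — equivalently, as in Theorem \ref{t3.2}, epimorphicity follows from the identification of $W_n$ with the algebra $A_n$ of Young diagrams.
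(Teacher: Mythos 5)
Your overall plan --- project to each degree, identify $\rho_n(w)$ with $w$ through its action in degree $n$, then componentwise bookkeeping plus a triangular argument for bijectivity --- is sound, and two of your ingredients are correct and even sharper than the paper's: the triangular solution of $v_n=\sum_{m\le n}\rho_n(w_m)$ makes the paper's one-line epimorphicity claim explicit, and the identity $:D_e^{\,n-m}W(\Delta):\,=\frac{(k+n-m)!}{k!}W([\Delta,1^{\,n-m}])$ is right (with the factor $\frac{1}{(n-m)!}$ it is the normalization consistent with the paper's own multiplication examples; the coefficient printed in the paper's definition of $\rho_n$ is misprinted). The genuine gap is the step ``the action of $W_n\cong A_n$ on $X_n$ is faithful'': this is \emph{false} for every $n\ge 3$, and your homomorphicity argument rests on it. Take $n=3$ and $w=W[111]-W[21]+W[3]\in W_3$. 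These three operators are linearly independent (on a monomial with pairwise distinct indices they produce pairwise disjoint sets of monomials), so $w\ne 0$; yet $w$ kills every degree-$3$ graph-variable. Indeed, for a monomial $x=X_{a_1b_1}X_{a_2b_2}X_{a_3b_3}$ and $\sigma\in S_3$ put $\sigma\cdot x=X_{a_{\sigma(1)}b_1}X_{a_{\sigma(2)}b_2}X_{a_{\sigma(3)}b_3}$; a direct check shows $W[111]$, $W[21]$, $W[3]$ act on $x$ as $\sum_\sigma\sigma\cdot x$ with $\sigma$ running over the identity, the transpositions, and the $3$-cycles respectively, so $w(x)=\sum_{\sigma\in S_3}\mathrm{sgn}(\sigma)\,\sigma\cdot x$. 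This antisymmetrization vanishes whenever two $a_i$ or two $b_i$ coincide, and its sum over all generic monomials (all $a_i$ distinct, all $b_i$ distinct) vanishes as well, since each $\sigma$ permutes the generic monomials among themselves; hence $w(X_{[\Gamma]})=0$ for every graph $\Gamma$ of degree $3$. Conceptually, by Lemmas \ref{l3.1} and \ref{l3.3} the action of $W_n$ on $X_n$ is left multiplication in $B_n$ composed with $\phi_n:A_n\rightarrow B_n$, and $\phi_n$ is \emph{not} injective: the sign representation of $S_3$ does not occur in $H_3$, and $\frac{1}{6}w$ is exactly its central projector. This is also why the argument you are transplanting does work in Theorem \ref{t3.2}: there the evaluation map is the regular representation of the unital algebra $B_n$ on itself, which is faithful, whereas here it is not the regular representation of $A_n$.

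The repair keeps your structure but changes the test space, and lands you essentially on the paper's proof, which stays at the level of operator identities and never evaluates on graph-variables. Prove your key identity $\rho_n(w)(x)=w(x)$ for \emph{every} degree-$n$ polynomial $x$ in the variables $X_{ij}$, not only for graph-variables; your Euler-operator induction and the normal-ordering computation are operator-level statements and establish exactly this stronger claim. On the full degree-$n$ polynomial space the representation of $W_n$ \emph{is} faithful (test on a monomial with pairwise distinct row and column indices: distinct diagrams $\Delta$ yield output monomials related to it by row permutations of distinct cycle types, so no cancellation can occur). With these two replacements your evaluation argument closes and yields $\rho_n(w^1\circ w^2)=\rho_n(w^1)*\rho_n(w^2)$ for all $n$ --- precisely the chain of equalities the paper declares to follow ``immediately from the definitions'' --- after which your injectivity and surjectivity arguments go through unchanged.
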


\begin{proof} Immediately from the definitions it follows that $pr_n(\rho_\uparrow (W(\Delta)))= \rho_n(W(\Delta))$ and
$pr_n(\rho_\uparrow (W(\Delta_1)\circ W(\Delta_2)))=$ $pr_n(\rho_\uparrow (\rho_n(W(\Delta_1))\circ \rho_n(W(\Delta_2))))=$
$pr_n(\rho_\uparrow (\rho_n(W(\Delta_1))*\rho_n(\Delta_2)))=
\rho_n(W(\Delta_1))*\rho_n(W(\Delta_2))= pr_n(\rho_\uparrow(W(\Delta_1))*\rho_\uparrow(W(\Delta_2)))$.
Thus, $\rho_\uparrow (W(\Delta_1)\circ W(\Delta_2))= \rho_\uparrow(W(\Delta_1))*\rho_\uparrow(W(\Delta_2))$.
Monomorphicity of the homomorphism $\rho_\uparrow $ is immediate. Epimorphicity follows from the equality
$W_{\uparrow}= \sum\limits_{n=1}^{\infty}\rho_{\uparrow}(W_{n})$.
\end{proof}
\begin{example} \label{ex1.6}
$$\rho_\uparrow (W([1]\circ W([1]))= \rho_\uparrow (W([1] + 2W([11])) =
\rho_\uparrow (W([1]) + 2\rho_\uparrow ( W([11]))=$$
$$=(W[1],2W[11],3W[111],\dots) + 2(W[11],3W[111],4W[1111],\dots) =
(W[1],4W[11],9W[111],\dots)$$.
On the other hand, $\rho_\uparrow (W([1])* \rho_\uparrow (W([1]))=\rho_\uparrow(W([1])) * \rho_\uparrow(W([1]))= \\
(W[1],2W[11],3W[111],\dots) * (W[1],2W[11],3W[111],\dots) = (W[1],4W[11],9W[111],\dots)$
\end{example}

\begin{example} \label{ex1.7}
$$\rho_\uparrow (W([2]\circ W([2]))= \rho_\uparrow ( W[11] + 3W[3] + 2W[22]) =$$
$$=\rho_\uparrow (W[11]) + 3\rho_\uparrow (W[3]) + 2\rho_\uparrow (W[22])=$$
$$=(W[11],3W[111],6W[1111],\dots) + 3(W[3],W[31],W[311],\dots) + 2(W[22],W[221],W[2211],\dots) =$$
$$=(W[11],3(W[3]+W[111]),3W[31]+2W[22]+6W[1111],\dots)$$.
On the other hand, $\rho_\uparrow (W([2])* \rho_\uparrow (W([2]))=\rho_\uparrow^W(W[2]) * \rho_\uparrow^W(W[2])=\\
=(W[2],W[21],W[211],\dots) * (W[2],W[21],W[211],\dots) =\\= (W[2]*W[2],W[21]*W[21],W[211]*W[211],\dots)=$

\vspace{1ex}

$=(W[11],3(W[3]+W[111]),3W[31]+2W[22]+6W[1111],\dots)$
\end{example}

Denote through $l^W_n: l^W_n\rightarrow\mathbb{C}$ the linear functional such that $l^W_n(W[\Delta])= l^A_n(\Delta)$.
Then, as proved in \cite{MMN2} (Lemma 4.4),

\begin{theorem}\label{t3.4} The correspondence $\Delta\mapsto W[\Delta]$ gives rise to an isomorphism
of the Frobenius pairs $\mathcal{W}_n:(A_n,l^A_n)\rightarrow(W_n,l^W_n)$
\end{theorem}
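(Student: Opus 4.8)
The plan is to reduce the statement to the single claim that $\mathcal{W}_n$ is an isomorphism of \emph{algebras}; compatibility with the functionals is then automatic. Indeed $l^W_n$ is defined by $l^W_n(W[\Delta])=l^A_n(\Delta)$, so $l^W_n\circ\mathcal{W}_n=l^A_n$ holds tautologically. Once $\mathcal{W}_n$ is known to be an algebra isomorphism, the form $(w_1,w_2)=l^W_n(w_1*w_2)$ is the pullback under $\mathcal{W}_n^{-1}$ of the form generated by $l^A_n$ on $A_n$, which is nondegenerate (on the class algebra it equals $\frac{|C_{\Delta_1}|}{n!}\delta_{\Delta_1\Delta_2}$), so $(W_n,l^W_n)$ is a Frobenius pair and $\mathcal{W}_n$ an isomorphism of Frobenius pairs. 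Thus everything rests on multiplicativity.

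To prove multiplicativity I would diagonalize both algebras simultaneously on the space $\Lambda_n$ of degree-$n$ symmetric functions, using the Schur functions $s_\lambda$, $\lambda\vdash n$, as a common eigenbasis. The representation-theoretic input (this is precisely the content of \cite{MMN2}, Lemma~4.4, and is the source of the whole identification) is that the cut-and-join operator acts diagonally, $W[\Delta]\,s_\lambda=\omega_\lambda(\Delta)\,s_\lambda$, where for $|\Delta|=|\lambda|=n$ the eigenvalue is the central character $\omega_\lambda(\Delta)=|C_\Delta|\,\chi^\lambda(\Delta)/\dim V_\lambda=n!\,\kappa(\Delta)\,\chi^\lambda(\Delta)/\dim V_\lambda$. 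One also needs the elementary fact that $W[\Delta']$ with $|\Delta'|>n$ annihilates $\Lambda_n$: in normal-ordered form $W[\Delta']$ carries exactly $|\Delta'|$ derivatives, hence kills every polynomial of degree $<|\Delta'|$.

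The observation that lets the regularized (restricted) product be computed by honest composition on $\Lambda_n$ is a degree bound. When $|\Delta_1|=|\Delta_2|=n$, every term $W[\Delta']$ occurring in $W[\Delta_1]\circ W[\Delta_2]$ has $|\Delta'|\ge n$: re-normal-ordering the composite produces at most $n$ contractions (each pairs a derivative of the left factor with a variable of the right factor and merges two $D$'s into one), so the $D$-count drops from $2n$ by at most $n$. Consequently the terms with $|\Delta'|>n$ vanish on $\Lambda_n$, while the degree-$n$ part is exactly $pr_n(W[\Delta_1]\circ W[\Delta_2])=W[\Delta_1]*W[\Delta_2]$; hence on $\Lambda_n$ the restricted product agrees with composition and acts as $\mathrm{diag}\big(\omega_\lambda(\Delta_1)\,\omega_\lambda(\Delta_2)\big)_{\lambda\vdash n}$. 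On the other side, the Fourier (Wedderburn) decomposition of the class algebra sends $\Delta\mapsto(\omega_\lambda(\Delta))_{\lambda\vdash n}$ and is an algebra isomorphism of $A_n$ onto $\prod_{\lambda\vdash n}\mathbb{C}$, since $C_\Delta$ acts on $V_\lambda$ by $\omega_\lambda(\Delta)$. Because both algebras are carried to the diagonal operators by the \emph{same} eigenvalues, $\Delta\mapsto W[\Delta]$ intertwines them and is a homomorphism; injectivity (equivalently, linear independence of the $W[\Delta]$, giving $\dim W_n=p(n)=\dim A_n$) follows from invertibility of the character table $[\chi^\lambda(\Delta)]$, i.e. from orthogonality of characters. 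This makes $\mathcal{W}_n$ a bijective algebra map.

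The main obstacle is the first step, the eigenvalue identity $W[\Delta]\,s_\lambda=\omega_\lambda(\Delta)\,s_\lambda$ in the present infinite-matrix normalization $D_{ab}=\sum_e X_{ae}\partial/\partial X_{be}$: one must check that the constants $\kappa(\Delta)$ are tuned so the eigenvalue is \emph{exactly} the central character, not merely proportional to it, and that passing to the stable $N\to\infty$ limit does not disturb the degree-$n$ sector. This is exactly what is verified in \cite{MMN2}; everything after it is the formal bookkeeping above.
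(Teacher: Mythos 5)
Your proposal is correct and is essentially the paper's own route: the paper proves Theorem \ref{t3.4} purely by citation to \cite{MMN2} (Lemma 4.4), whose content is exactly the diagonalization you invoke --- $W[\Delta]$ acting on Schur functions by the central character $\omega_\lambda(\Delta)=|C_\Delta|\,\chi^\lambda(\Delta)/\dim V_\lambda$, combined with the identification of the class algebra $A_n$ with diagonal operators via invertibility of the character table. The extra steps you supply (reducing the Frobenius-pair claim to multiplicativity of $\Delta\mapsto W[\Delta]$, and the degree bound showing that the restricted product $pr_n(W[\Delta_1]\circ W[\Delta_2])$ is computed by honest composition on the degree-$n$ symmetric functions) are precisely the bookkeeping that the paper's citation leaves implicit, and they are sound.
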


\vspace{2ex}

\subsection{\CF structures}

Define a linear operator $f_n: W_n\rightarrow V_n$ with the equality $f_n(w)=pr_n(f(w))$.
Their set gives rise to a linear operator $f_{\uparrow}: W_{\uparrow}\rightarrow V_{\uparrow}$.

These are a few examples of action of the cut-and-join operators on the graph-variables.

\begin{example} $$W[1](X_{\left[\ododlo\right]})=
\sum_{a} D_{aa}(\sum_{a,b}X_{ab})=X_{\left[\ododlo\right]}$$
$$W[1](X_{\left[\dvadva\right]})= \frac{1}{2} W[1,1](X_{\left[\dvadva\right]})=
\frac{1}{4}:\sum_{a}D_{aa}\sum_{b}D_{bb} :(\frac{1}{2}\sum_{\stackrel{a_1\neq a_2}{b_1\neq b_2}}
X_{a_1b_1}X_{a_2b_2})=
\frac{1}{2}X_{\left[\dvadva\right]}$$
$$W[1](X_{\left[\ododld\right]})=\frac{1}{2} W[1,1](X_{\left[\ododld\right]})=
\frac{1}{4}:\sum_{a}D_{aa}\sum_{b}D_{bb} :(\frac{1}{2}\sum_{a,b}
X_{ab}^2) =\frac{1}{2}X_{\left[\ododld\right]}$$
$$W[2](X_{\left[\dvadva\right]})=
\frac{1}{2}:\sum_{a,b}D_{ab}D_{ba}:(\frac{1}{2}\sum_{\stackrel{a_1\neq a_2}{b_1\neq b_2}}
X_{a_1b_1}X_{a_2b_2}) =X_{\left[\dvadva\right]}$$
$$W[2](X_{\left[\ododld\right]})=
\frac{1}{2}:\sum_{a,b}D_{ab}D_{ba}:(\frac{1}{2}\sum_{a,b}
X_{ab}^2) =X_{\left[\ododld\right]}$$
\end{example}

\begin{lemma} \label{l3.2} $\rho_\uparrow f=f_\uparrow\rho_\uparrow$
\end{lemma}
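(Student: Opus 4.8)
The plan is to read Lemma~\ref{l3.2} as the assertion that the square
$$
\begin{CD}
W @>f>> V\\
@V\rho_\uparrow VV @VV\sigma_\uparrow V\\
W_\uparrow @>f_\uparrow>> V_\uparrow
\end{CD}
$$
commutes, where $\sigma_\uparrow$ is the extension operator on $V$ playing on the graph side the role that $\rho_\uparrow$ plays on the Young-diagram side. Since every map in sight is linear, it suffices to test the identity on the basis $\{W[\Delta]\}$ of $W$; and since both composites land in $V_\uparrow=\prod_n V_n$, it suffices to compare their $n$-th components. Thus I must show $\sigma_n(f(W[\Delta]))=f_n(\rho_n(W[\Delta]))$ in $V_n$ for every $\Delta$ (say $|\Delta|=m$) and every $n$.

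Next I would pass from operators to their action. By Theorem~\ref{t3.1} the map $\mathcal V_n$ identifies $(V_n,l^V_n)$ with $(B_n,l^B_n)$, and by Lemma~\ref{l3.1} the action $V[\Gamma](X_{[\Gamma']})=X_{[\Gamma\ast\Gamma']}$ is just left multiplication in $B_n$; since $B_n$ is unital (its identity is $\mathfrak e_n^B$) this action is faithful, so a degree-$n$ graph-operator is completely determined by how it acts on the degree-$n$ graph-variables $X_{[\Gamma']}$. It therefore suffices to check that the two sides act alike on every $X_{[\Gamma']}$ with $|\Gamma'|=n$. On the right, $f_n(\rho_n(W[\Delta]))=pr_n\,f(\rho_n(W[\Delta]))$ acts on $X_{[\Gamma']}$ exactly as $\rho_n(W[\Delta])$ does, because $f$ preserves the action and $\rho_n(W[\Delta])\in W_n$ is already homogeneous of degree $n$. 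On the left, $f(W[\Delta])\in V_m$, and $\sigma_n$ is by construction the graph-operator extension $V[\Gamma]\mapsto V[\rho_n(\Gamma)]$, so the convention $V[\Gamma](X_{[\Gamma']})=V[\sigma_{|\Gamma'|}(\Gamma)](X_{[\Gamma']})$ gives that $\sigma_n(f(W[\Delta]))$ acts on $X_{[\Gamma']}$ exactly as $f(W[\Delta])$, i.e. as $W[\Delta]$ itself.

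Hence everything reduces to one statement, which I expect to be the heart of the matter: the degree-$m$ cut-and-join operator $W[\Delta]$ and its extension $\rho_n(W[\Delta])$ (a suitably normalised $:D_e^{\,n-m}W[\Delta]:$, adjoining $n-m$ unit cycles to $\Delta$) act identically on the degree-$n$ graph-variables. The mechanism is clear: on a homogeneous monomial of degree $n$ the order-$m$ operator $W[\Delta]$ differentiates $m$ of the $n$ factors in the cyclic pattern prescribed by $\Delta$ and leaves the remaining $n-m$ factors as spectators, whereas in $:D_e^{\,n-m}W[\Delta]:$ those $n-m$ factors are instead struck by the trivial cycles $D_{aa}$, each acting as the identity on a single variable; the normalisation is precisely what cancels the $(n-m)!$ orderings of these trivial slots, matching the two expressions term by term. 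A clean sanity check is $\Delta=[1]$: then $W[1]=D_e$ is the Euler operator, acting on $X_n$ as multiplication by $n$, while $\rho_n(W[1])=n\,W[1^n]$ acts as $n$ times the identity since $W[1^n]=\tfrac{1}{n!}:D_e^{\,n}:$ acts as $1$ on squarefree degree-$n$ monomials.

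The main obstacle, where I would spend the real effort, is the bookkeeping for monomials that are not squarefree, i.e. for graph-variables with repeated edges or nontrivial automorphisms: there one must carry the $\kappa(\Delta)$ and $|\Aut(\cdot)|$ normalisations through the normal-ordered product and verify that the multiplicities it produces agree with those of the direct action of $W[\Delta]$. Since the $W[\Delta]$ are balanced trace-type operators, no regularisation is needed here (by the Remark). Once this term-by-term matching is in place, the two sides act identically on all $X_{[\Gamma']}$; faithfulness of the $V_n$-action promotes this to equality in $V_n$, and assembling over $n$ and extending by linearity yields $\rho_\uparrow f=f_\uparrow\rho_\uparrow$.
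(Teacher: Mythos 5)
Your proposal is correct and follows essentially the same route as the paper: the paper's own proof likewise evaluates both composites on a graph-variable $x$ of degree $n$, observes that each side yields the poligraph-variable concentrated in degree $n$ whose component is the action of $W[\Delta]$ on $x$, and concludes by the (implicit) fact that graph-operators are determined by their action on graph-variables. The term-by-term matching of $W[\Delta]$ with $\rho_n(W[\Delta])$ on degree-$n$ variables that you flag as the remaining effort is exactly the defining property of $\rho_n$ (its combinatorial prefactor is chosen precisely to make this hold), which the paper also invokes without further verification, so your treatment is no less complete than the paper's own two-line argument.
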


\begin{proof} Consider the operator $w=W[\Delta]$  and the graph-variable $x$ of degree $n$. Then,
$(f_\uparrow(\rho_\uparrow(w))(x)= (f_n(w)(x))_{\uparrow}=(\rho_\uparrow f(w))(x)$,
where $(f_n(w)(x))_{\uparrow}$ is the poligraph-variable with the single non-zero component $f_n(w)(x)\in X_n$
\end{proof}

\begin{lemma} \label{l3.3} If $n=|\Delta|$, then $f_n(W[\Delta])= V[\phi_n(\Delta)]$
\end{lemma}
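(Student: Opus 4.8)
The plan is to recognize the claimed equality as the commutativity of the square built from the Frobenius-pair isomorphisms $\mathcal{W}_n$ of Theorem \ref{t3.4} and $\mathcal{V}_n$ of Theorem \ref{t3.1}, together with $f_n$ and $\phi_n$. Since $V[\phi_n(\Delta)]=\mathcal{V}_n^{-1}(\phi_n(\Delta))$, the assertion $f_n(W[\Delta])=V[\phi_n(\Delta)]$ is the same as $\mathcal{V}_n\circ f_n\circ\mathcal{W}_n=\phi_n$ evaluated on the basis $\{\Delta\}$ of $A_n$. Composing with the injective algebra map $\theta_n=\varrho\circ\mathcal{V}_n$ of Lemma \ref{l3.1} and using $\phi_n=\varrho^{-1}\tilde{\phi}_n$, which gives $\varrho(\phi_n(\Delta))=\tilde{\phi}_n(G_n(\Delta))$, everything is reduced to a single identity in $\End(H_n)$: the endomorphism induced by the action of $W[\Delta]$ on the degree-$n$ graph-variables must equal $\tilde{\phi}_n(G_n(\Delta))$, the image of the class sum of all permutations of cyclic type $\Delta$. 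As the right-hand side is $\varrho(\phi_n(\Delta))=\theta_n(V[\phi_n(\Delta)])\in\varrho(B_n)$, proving this identity simultaneously places $f_n(W[\Delta])$ in $V_n$ and, by injectivity of $\theta_n$, identifies it with $V[\phi_n(\Delta)]$.

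To make sense of the induced endomorphism I would invoke Lemma \ref{l3.1}: a degree-$n$ graph-operator is determined by its action $V[\Gamma](X_{[\Gamma']})=X_{[\Gamma*\Gamma']}$, and under a labelling $\chi:E(\Gamma')\to\mathfrak{M}$ a graph-variable $X_{[\Gamma']}$ corresponds to its pair of edge-partitions $(\sigma_L,\sigma_R)$, the endomorphism $\varrho(\Gamma)$ sending $\sigma_R$ to $\sigma_L$. By the definition of $f_n$, the operator $f_n(W[\Delta])$ acts on $X_{[\Gamma']}$ exactly as the differential operator $W[\Delta]$ does, so the endomorphism to be identified is precisely the one carried by the cut-and-join action.

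The core is then an explicit computation of this action. With $D_{ab}=\sum_e X_{ae}\frac{\partial}{\partial X_{be}}$ one sees that $D_{ab}$ relabels the left end $b$ of an edge to $a$ and fixes its right end; hence $\tr D^j$ applies a $j$-cycle to left vertices and $:\prod_j(\tr D^j)^{m_j}:$ applies a permutation of cyclic type $\Delta$ to the left vertices of $X_{[\Gamma']}$, leaving the right-vertex partition $\sigma_R$ untouched. Because $W[\Delta]$ carries exactly $n=|\Delta|$ derivatives while a degree-$n$ monomial carries exactly $n$ variables, every surviving term matches derivatives to variables bijectively, so no internal contractions occur and the output is the sum, over all type-$\Delta$ relabellings of $\sigma_L$, of the corresponding graph-variables with $\sigma_R$ preserved. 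This is precisely the action on $\sigma_L$ of $\tilde{\phi}_n(G_n(\Delta))$, which lets each permutation of $\mathfrak{M}$ of type $\Delta$ act by relabelling blocks; transported through $\varrho$ it is the required endomorphism of $H_n$.

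The main obstacle is the normalization bookkeeping inside this computation. One must check that the prefactor $\kappa(\Delta)=|\Aut(\Delta)|^{-1}$ in $W(\Delta)$, the number $n!\,\kappa(\Delta)$ of permutations of type $\Delta$, the multiplicities with which the derivatives $\frac{\partial}{\partial X_{be}}$ may hit equally labelled edges, and the automorphism weights $|\Aut(\Gamma)|^{-1}$ built into the graph-variables combine so that the regularized $N\to\infty$ action yields the \emph{unweighted} class sum $G_n(\Delta)$ exactly, rather than up to a scalar. Pinning down these constants, and not the qualitative statement that $W(\Delta)$ permutes left vertices by cyclic type $\Delta$, is where the real work lies.
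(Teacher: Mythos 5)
Your core computation is right and would support the lemma, but you take a genuinely different route from the paper, and the difference matters for exactly the bookkeeping you postpone. The paper's proof is a two-line identity of operator expressions: it rewrites the cut-and-join operator as
$W[\Delta]= \kappa(\Delta)\sum_{a_1,\dots,a_n\in\mathbb{N}}\, {:}D_{a_1a_{\sigma(1)}}\cdots D_{a_na_{\sigma(n)}}{:}$
for a single fixed permutation $\sigma$ of cyclic type $\Delta$, and then observes that, unwinding the definitions of $\phi_n=\varrho^{-1}\tilde{\phi}_n$ and of graph-operators, $V[\phi_n(\Delta)]$ is literally the same normal-ordered sum; since $f_n$ by definition sends $W[\Delta]$ to the degree-$n$ graph-operator acting as $W[\Delta]$ does, the equality holds at the level of expressions, with no module, no faithfulness, and no regularized action needed. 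You instead push everything through $\theta_n$ and $\varrho$ into $\End(H_n)$ and compare actions on graph-variables; this is sound (it needs the injectivity from Lemma \ref{l3.1} and the identity element of $B_n$ to pass from equality of actions to equality of operators), and it has the conceptual advantage of making visible why $\phi_n$ appears at all, namely through the $S_n$-action on partitions. But it re-imports the regularized differentiation, the factors $(N-|R(\Gamma)|)!/N!$, and the multiplicities of derivative hits, which is precisely the normalization problem you flag in your last paragraph and leave undone — so as written your argument stops one step short of a proof. The paper's formulation dissolves most of that problem structurally: summing over all index tuples $(a_1,\dots,a_n)$ with one fixed $\sigma$ packages the prefactor $\kappa(\Delta)=|\Aut(\Delta)|^{-1}$ and the weights $|\Aut(\Gamma)|^{-1}$ of the graph-operators into one and the same count of monomials ${:}D_{a_1b_1}\cdots D_{a_nb_n}{:}$, so there is no residual scalar to chase; if you want to complete your version, the cleanest fix is to replace the action-based comparison by this expression-level rewriting, or at least to perform your $\End(H_n)$ comparison on the identity graph-variable $X_{[\mathfrak{e}_n^B]}$ only, where the multiplicity analysis is manageable.
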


\begin{proof} The operator  $W[\Delta]$ can be represented in the form
$W[\Delta]= \kappa(\Delta)\sum\limits_{a_1,...,a_n\in\mathbb{N}}
:D_{a_1a_{\sigma(1)}}\cdots D_{a_na_{\sigma(n)}}:$, where $\sigma\in S_n$ is a permutation of the cyclic type $\Delta$.
In accordance with the definition, the operator $V[\phi_n(\Delta)]$ has the same form.
\end{proof}

\begin{theorem}\label{t1.5} The set $((W,l^W),(V,l^V),f, R,\star)$ forms an equipped \CF algebra.
\end{theorem}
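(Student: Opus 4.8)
The plan is to reduce the infinite-dimensional statement to the finite-dimensional Theorem \ref{t2.1} by exhibiting $W$ and $V$ as Cartesian products of the finite-dimensional algebras $W_n$ and $V_n$ and then checking that every ingredient of the equipped structure respects this decomposition. Concretely, I would take the index set $\mathfrak{C}=\mathbb{N}$ and use the isomorphism of algebras $\rho_\uparrow:W\to W_\uparrow=\prod_n W_n$ (Theorem \ref{t3.3}) together with $\sigma_\uparrow:V\to V_\uparrow=\prod_n V_n$ (Theorem \ref{t3.2}) to transport the products $\circ$ into the componentwise restricted products $*$. This realizes the required decompositions $W=\prod_\gamma W_\gamma$, $V=\prod_\gamma V_\gamma$ with $W_\gamma=W_n$, $V_\gamma=V_n$, the projection onto the $\gamma=n$ factor being $w\mapsto pr_n(\rho_\uparrow(w))$ and $v\mapsto pr_n(\sigma_\uparrow(v))$.

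Next I would install the functionals and identify each factor with the algebras of Young diagrams and bipartite graphs. By Theorem \ref{t3.4} the maps $\mathcal{W}_n$ are isomorphisms of Frobenius pairs $(A_n,l^A_n)\cong(W_n,l^W_n)$, and by Theorem \ref{t3.1} the maps $\mathcal{V}_n$ are isomorphisms $(B_n,l^B_n)\cong(V_n,l^V_n)$; in particular every $(W_n,l^W_n)$ and $(V_n,l^V_n)$ is a Frobenius pair, which is condition 1) of the infinite-dimensional definition. For condition 2) I must check that $f$ carries $W_n$ into $V_n$ and that its restriction $f_n$ corresponds under $\mathcal{W}_n,\mathcal{V}_n$ to the homomorphism $\phi_n:A_n\to B_n$. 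This is exactly Lemma \ref{l3.3}, $f_n(W[\Delta])=V[\phi_n(\Delta)]$, promoted to the product by the intertwining relation $\rho_\uparrow f=f_\uparrow\rho_\uparrow$ of Lemma \ref{l3.2}; the two together guarantee that $f$ acts componentwise as the finite homomorphism $\phi_n$, so the Cardy-Frobenius conditions on $\phi$ hold factor by factor.

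For the equipped structure (condition 3)) I would check that the involution $\star$ preserves each $W_n$ and $V_n$ and agrees, under the identifications above, with the finite involutions on $A_n$ and $B_n$, and then verify that the projections $pr_n(\rho_\uparrow(R))$ are precisely the cross-cap elements $U_n$ of Theorem \ref{t2.1}. Since $\rho_\uparrow$ acts by adjoining unit lines (fixed points) to a Young diagram, the required identity $\sum_{m\le n}\rho_n(r_m)=U_n$ is the combinatorial statement that summing the squares of fixed-point-free permutations over all ways of restoring the fixed points reproduces $\sum_{\sigma\in S_n}\sigma^2$; once this is in place, $U_n^2=K_{W_n}$ and $f_n(U_n)=K_{V_n}^\star$ follow from the finite relations $U_n^2=K_{A_n}^\star$, $\phi_n(U_n)=K_{B_n}^\star$ transported by $\mathcal{W}_n,\mathcal{V}_n$.

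With all three conditions verified factorwise, each quintuple $((W_n,l^W_n),(V_n,l^V_n),f_n,U_n,\star)$ is an equipped Cardy-Frobenius algebra by Theorem \ref{t2.1}, and taking the Cartesian product over $n$ yields the asserted equipped structure on $((W,l^W),(V,l^V),f,R,\star)$. I expect the main obstacle to be the fixed-point bookkeeping in the identification of $R$: one must reconcile the fixed-point-free normalization used to define the degree-$n$ summand $r_n$ with the full sum of squares that serves as the cross-cap element of the $n$-th factor, keeping careful track of the automorphism weights hidden inside $\rho_\uparrow$ and of the unit of the $\circ$-algebra. The remaining compatibilities (functionals, homomorphism, involution) should be routine once Lemmas \ref{l3.1}--\ref{l3.3} and Theorems \ref{t3.1}--\ref{t3.4} are available.
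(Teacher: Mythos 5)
Your proposal is correct and follows essentially the same route as the paper's own proof: both reduce to the finite-dimensional equipped Cardy--Frobenius algebras of Theorem \ref{t2.1} by transporting the structure through the isomorphisms $\rho_\uparrow$, $\sigma_\uparrow$ (Theorems \ref{t3.3}, \ref{t3.2}) and the factorwise identifications $\mathcal{W}_n$, $\mathcal{V}_n$ (Theorems \ref{t3.4}, \ref{t3.1}), with Lemmas \ref{l3.2} and \ref{l3.3} supplying the compatibility $f_n\mathcal{W}_n=\mathcal{V}_n\phi_n$. The only difference is one of detail: where the paper declares the remaining requirements (the involution and the identification of the projections of $R$ with the cross-cap elements $U_n$) to ``follow immediately from the definitions,'' you spell out the fixed-point bookkeeping behind that identification.
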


\begin{proof} Due to Theorems \ref{t3.2}, \ref{t3.3} and Lemma \ref{l3.2}, one suffices to prove that the isomorphisms
$\mathcal{W}_n$ and  $\mathcal{V}_n$ give rise to isomorphisms of \CF algebras
$((A_n,l^A_n),(B_n,l^B_n),\phi_n, U_n,\star)$ and $((W_n,l^W_n),(V_n,l^V_n),f_n, R_n,\star)$. Homomorphism of Frobenius pairs
is proved in Theorems \ref{t3.1}, \ref{t3.4}. The relation $f_n\mathcal{W}_n=\mathcal{V}_n\phi_n$
follows from Lemma \ref{l3.3}. The remaining requirements follow immediately from
the definitions.
\end{proof}


\begin{thebibliography}{References}

\bibitem{AMMN} Alexandrov A., Mironov A., Morozov A., Natanzon S.,
     Integrability of Hurwitz partition Functions I.
     Summary, J. Phys. A: Math. Theor. 45 (2012) 045209, arXiv:1103.4100.

\bibitem{AN} Alexeevski A., Natanzon S., Noncommutativ two-dimensional topological field theories and Hurwitz numbers for real algebraic
curves. Selecta Math., New ser. v.12 (2006) 307-377, arXiv:
math.GT/0202164.

\bibitem{AN1} Alexeevski A., Natanzon S.,
Algebra of Hurwitz numbers for seamed surfaces, Rus. Math.Surv., 61:4 (2006)  767-769.

\bibitem{AN2} Alexeevski A., Natanzon S., Algebra of bipartite graphs
and Hurwitz numbers of seamed surfaces. Math.Russian Izvestiya 72 (2008) 3-24.

\bibitem{AN3}  Alexeevski A., Natanzon S., Hurwitz numbers for regular coverings of surfaces by seamed surfaces and Cardy-Frobenius
    algebras of finite groups, Amer. Math. Sos. Transl. (2) Vol
224 (2008) 1-25, math/07093601.

\bibitem{At} Atiyah M., Topological Quantum Field Theories,
Inst. Hautes Etudes Sci. Publ. Math., 68 (1988) 175-186.

\bibitem{D} Dijkgraaf  R., Mirror symmetry and elliptic curves,
The moduli spaces of curves, Progress in Math., 129 (1995) 149-163,
Birkh\"auser.

\bibitem{KO} Kerov S., Olshanski G., Polynomial functions
on the set of Young diagrams,
Comptes Rendus Acad. Sci. Paris, Ser. I, 319 (1994) 121-126.

\bibitem{Laz} Lazaroiu C.I., On the structure of open-closed topological field
theory in two-dimensions, Nucl. Phys. B 603 (2001) 497-530.

\bibitem{LN}  Loktev S., Natanzon S., Klein Topological Field Theories
from Group Representations, arXiv:0910.3813.

\bibitem{Moore} Moore G., Some comments on branes, G-flux, and K-theory,
Int.J.Mod.Phys.A 16,936(2001), arXiv:hep-th/0012007.

\bibitem{MS} Moore G., Segal G.,
D-branes and K-theory in 2D topological field theory, arXiv:hep-th/0609042.

\bibitem{MMN1} Mironov A., Morozov A., Natanzon S.,
Complete Set of Cut-and-Join Operators in Hurwitz-Kontsevich Theory,
Theor.Math.Phys. 166 (2011) 1-22,  arXiv:0904.4227.

\bibitem{MMN2}  Mironov A., Morozov A., Natanzon S.,
Algebra of differential operators associated with Young diagrams,
Journal of Geometry and Physics 62 (2012) 148-155, arXiv:1012.0433.

\bibitem{MMN3} Mironov A., Morozov A., Natanzon S.,
Integrability properties of Hurwitz partition functions.
II. Multiplication of cut-and-join operators and WDVV equations,
JHEP 11 (2011) 097.

\bibitem{MMN4} Mironov A., Morozov A., Natanzon S.,
A Hurwitz theory avatar of open-closed strings,  arXiv:1208.5057.

\bibitem{N} Natanzon S., Simple Hurwitz Numbers of a Disk,
Funktsional'nyi Analiz i Ego Prilozheniya, 44 (2010) 44-58.

\end{thebibliography}
\end{document}